\tikzset{
    master/.style={
        execute at end picture={
            \coordinate (lower right) at (current bounding box.south east);
            \coordinate (upper left) at (current bounding box.north west);
        }
    },
    slave/.style={
        execute at end picture={
            \pgfresetboundingbox
            \path (upper left) rectangle (lower right);
        }
    }
}
\newcommand{\vone}{\vspace{.1in}}
\newcommand{\hide}[1]{}
\newcommand{\boi}{\begin{itemize}}
\newcommand{\eoi}{\end{itemize}}
\newcommand{\bii}{\begin{itemize}}
\newcommand{\eii}{\end{itemize}}
\newcommand{\bei}{\begin{enumerate}}
\newcommand{\eei}{\end{enumerate}}
\newcommand{\bins}{\beta}
\newcommand{\msf}[1]{\ensuremath{{\mathsf {#1}}}}
\newcommand{\mcal}[1]{\ensuremath{\mathcal {#1}}}
\newcommand{\algA}{{\ensuremath{\mcal{A}}}\xspace}
\newcommand{\Sim}{\ensuremath{\msf{Sim}}\xspace}
\newcommand{\E}{\mathbb{E}}
\newcommand{\group}{\ensuremath{g}\xspace}
\definecolor{darkgreen}{rgb}{0,0.5,0}
\definecolor{lightblue}{RGB}{0,176,240}
\definecolor{darkblue}{RGB}{0,112,192}
\definecolor{lightpurple}{RGB}{124, 66, 168}
\definecolor{grey}{RGB}{139, 137, 137}
\definecolor{maroon}{RGB}{178, 34, 34}
\definecolor{green}{RGB}{34, 139, 34}
\definecolor{types}{RGB}{72, 61, 139}
\definecolor{gold}{rgb}{0.8, 0.33, 0.0}
\definecolor{darkgray}{gray}{0.3}
\newcommand{\skiptext}[1]{}
\definecolor{darkred}{rgb}{0.5, 0, 0}
\definecolor{darkgreen}{rgb}{0, 0.5, 0}
\definecolor{darkblue}{rgb}{0,0,0.5}
\newcommand\markx[2]{}
\renewcommand{\path}{\ensuremath{\mathsf{path}}\xspace}
\newcommand{\addr}{{\ensuremath{\mathsf{addr}}}\xspace}
\renewcommand{\negl}{{\sf negl}}
\newcommand{\N}{\mathbb{N}}
\newcommand{\ignore}[1]{}
\renewcommand{\paragraph}[1]{\vspace{5pt}\noindent\textbf{#1}}
\newcounter{task}
\theoremstyle{definition}
\newtheorem{definition}{Definition}
\newtheorem{remark}{Remark}
\newtheorem{thm}{Theorem}[section]      
\newtheorem{theorem}[thm]{Theorem}
\newtheorem{lemma}[thm]{Lemma}
\newtheorem{fact}[thm]{Fact}
\newtheoremstyle{boxes}
{2pt}
{0pt}
{}
{}
{\bfseries}
{}
{\newline}
{\thmname{#1}\thmnumber{ #2}:  
\thmnote{#3}}
\theoremstyle{boxes}
\newcommand{\elaine}[1]{{\footnotesize\color{magenta}[Elaine: #1]}}
\newcommand{\vlr}[1]{{\footnotesize\color{orange}[Vijaya: #1]}}
\newcommand{\VLR}[1]{{\footnotesize\color{orange}[Vijaya: #1]}}
\newcommand{\gnote}[1]{{\footnotesize\color{blue}[Gilad: #1]}}
\newcommand{\weikai}[1]{{\footnotesize\color{green}[WK: #1]}}
\renewcommand{\gnote}[1]{}
\renewcommand{\weikai}[1]{}
\renewcommand{\elaine}[1]{}
\renewcommand{\VLR}[1]{}
\renewcommand{\vlr}[1]{}
\newcounter{cnt:challenge}
\begin{document}
\begin{titlepage}
\title{Data Oblivious Algorithms for Multicores\thanks{This work was supported in part by NSF grant CCF-2008241, CNS-2128519, and CNS-2044679. The extended abstract of this paper appears in {\it Proc. ACM Symp. on Parallelism 
in Algorithms and Architectures (SPAA)}, 2021.}}
\author{Vijaya Ramachandran \\ UT Austin \\ {\tt vlr@cs.utexas.edu}  
\and Elaine Shi \\ CMU \\ {\tt runting@gmail.com}}
\date{}

\maketitle

\begin{abstract}
A data-oblivious algorithm is an algorithm whose memory access pattern is independent of the input values.
As secure processors such as Intel SGX (with hyperthreading) become widely adopted, there is a growing appetite for private analytics on big data. Most prior works on data-oblivious algorithms adopt the classical PRAM model to capture parallelism. However, it is widely understood that PRAM does not best capture realistic multicore processors, nor does it reflect parallel programming models adopted in practice.

We initiate the study of parallel data oblivious algorithms on realistic multicores, best captured by the binary fork-join model of computation. We
present a data-oblivious 
CREW binary fork-join sorting
algorithm with optimal total work and optimal (cache-oblivious) cache complexity, and in $O(\log n \log \log n)$ span (i.e., parallel time); these bounds match the best-known bounds for binary fork-join cache-efficient
 insecure algorithms.  Using our sorting algorithm as a core primitive, we show how to data-obliviously simulate general PRAM algorithms in the binary fork-join model with non-trivial efficiency, and we present 
 data-oblivious algorithms for several applications including list ranking, Euler tour, tree contraction, connected components, and minimum spanning forest. All of our data oblivious algorithms have bounds that either
 match or improve over the best known bounds for insecure algorithms.  

Complementing these asymptotically efficient results, we present a practical variant of our sorting algorithm that is self-contained and potentially implementable.
It has optimal caching cost, and it is 
only a $\log \log n$  factor 
off from optimal work and about a $\log n$ factor off in terms of span;
moreover, it achieves small constant factors in its bounds. 
We also present
an EREW variant with optimal work and
caching cost, and with the same asymptotic
span.
\end{abstract}

\end{titlepage}

\section{Introduction}
\label{sec:intro}

As secure processors such as Intel SGX (with hyperthreading) become widely adopted, there is a growing appetite for private analytics on big data. Most prior works on data-oblivious algorithms adopt the classical PRAM model to capture parallelism. However, it is widely understood that PRAM does not best capture realistic multicore processors, nor does it reflect parallel programming models adopted in practice.

We initiate the study of {\it data-oblivious} algorithms for a  multicore architecture 
where parallelism 
and synchronization 
are expressed with nested 
 binary fork-join operations.
Imagine that a client outsources encrypted data to an untrusted 
cloud server which is equipped with a secure, multicore processor architecture 
(e.g., Intel SGX with hyperthreading).
All data contents are encrypted to the secure processor's 
secret key 
both at rest and in transit.
Data is decrypted only inside the secure cores' hardware sandboxes 
where computation takes place.
However, it is well-known that
encryption alone 
does not guarantee privacy, since access patterns to even encrypted
data leak a lot of sensitive information~\cite{accesspatternleak,controlchannel}.
To defend against access
pattern leakage, an 
active
  line 
of work~\cite{oram00,oram10,oblivm,circuitopram,opram,asiacrypt11}
has focused on how to design algorithms whose access pattern 
distributions do not depend on the secret inputs --- such algorithms are
called {\it data-oblivious} algorithms, and are the focus of our work.
In this paper we present nested fork-join data-oblivious algorithms for several fundamental 
problems that are highly parallel, and work- and cache-efficient.
Throughout this paper, we consider only data oblivious algorithms
that are unconditionally secure (often called
``statistically secure''), i.e., without the need to make any computational
hardness assumptions.

There has been  some prior work exploring the design of {\it parallel} data-oblivious algorithms.
Most of these prior parallel data-oblivious algorithms~\cite{graphsc,opram,circuitopram}
adopted PRAM as the model of computation.
However, the global synchronization between cores at each parallel step of the computation in a PRAM algorithm
does not best capture modern multicore architectures, where the cores typically 
proceed asynchronously.
To better account for synchronization cost, a long 
line of work~\cite{bfork01,lowdepthco,bfork02,bfork03,bfork04,CR07,CR08,FS09,CR12,CR12b,CR13,CRSB13,CR17,CR17b,bfork05,bfork06,bfork07,bfork08} 
has adopted a multithreaded computation model  with CREW (Concurrent Read Exclusive Write) shared memory
in which parallelism is expressed through paired fork and join operations.
A binary fork spawns two tasks that can
execute in parallel. Its corresponding join is a synchronization point: both of the spawned
tasks must complete before the computation can proceed beyond this join.
Such a binary fork-join model is also  
adopted in practice, and supported by programming
systems such as Cilk~\cite{cilk5}, the 
Java fork-join framework~\cite{jforkjoin}, X10~\cite{x10}, 
Habanero~\cite{habanero}, Intel Threading Building Blocks~\cite{tbb},
and the Microsoft Task Parallel Library~\cite{tpl}.

Efficiency in  a multi-threaded algorithm with binary fork-joins is measured through the following metrics:
the algorithm's {\it total work} (i.e., the sequential execution time), 
its {\it cache complexity} (i.e., the number of cache misses),
and its {\it span} (i.e., the length of the longest path in the computation DAG). 
The span is also the number of parallel steps in the computation assuming that unlimited number of processors 
are available and they all execute at the same rate. We discuss the binary fork-join model in more detail in Section~\ref{sec:prelim} and Appendix~\ref{sec:comp-model}.

\subsection{Our Results}

We present highly parallel data-oblivious multithreaded algorithms that are also cache-efficient for a suite of fundamental computation tasks,   
starting with sorting.
Our results show how to get privacy for free 
for a range of tasks in this important parallel computation model. 
Further, all of our algorithms are 
 cache-agnostic~\cite{frigo1999cache}\footnote{``Cache-agnostic'' is
also called cache-oblivious in the algorithms literature~\cite{frigo1999cache}. 
In order to avoid confusion with data obliviousness, 
we will reserve the term `oblivious' for
data-oblivious, and we will use {\it cache-agnostic} in place of cache-oblivious.},
i.e., the algorithm
need not know the cache's parameters including the cache-line (i.e., block) size 
and cache size. 
Besides devising new algorithms, our work also makes
a conceptual contribution by creating a bridge between  
two lines of work from the cryptography and 
algorithms literature, respectively. We now state our results. 

\paragraph{Sorting.}
To attain our results, the most important 
building block is data-oblivious sorting.
We present a randomized data-oblivious, cache-agnostic CREW binary fork-join
sorting algorithm, {\sc Butterfly-Sort},
\hide{
with optimal work and cache 
complexity and high parallelism that
matches the span of SPMS~\cite{CR17}, which has the current best span 
known for cache-efficient multithreaded algorithms.  
}
which matches the work, span and cache-oblivious caching bounds of SPMS~\cite{CR17},
the current best insecure algorithm. 
These bounds are given in Theorem~\ref{thm:sort}. As noted in~\cite{CR17}, these bounds are optimal for
 work and cache complexity, and are within an $O(\log\log n)$ factor of optimality for span 
 (even for the CREW PRAM).
The key ingredient in {\sc Butterfly-Sort} is  {\sc Butterfly-Random-Permute (B-RPermute)}, which 
randomly permutes the input array without leaking
the permutation, and guided by a butterfly network. The overall  {\sc Butterfly-Sort} simply runs {\sc B-RPermute} and then 
runs SPMS on the randomly
permuted array.

In comparison with known cache-agnostic and data-oblivious
sorting algorithms~\cite{cacheoblosort}, {\sc Butterfly-Sort}  improves the 
span by an (almost) exponential factor --- the prior work~\cite{cacheoblosort} requires 
$n^\epsilon$ parallel runtime  
 for some constant $\epsilon \in (0, 1)$ 
even without the binary fork-join constraint. 
As a by-product, our work gives a new oblivious sort result
in the standard external-memory model:
we actually give the first oblivious sort algorithm 
that is optimal in both total work and cache complexity --- such a result
was not known before even in the cache-aware setting.
For example, the prior works by Chan et al.~\cite{cacheoblosort}
and Goodrich~\cite{Goodrich-spaa11}
achieve optimal cache complexity, but 
incur extra $\log \log n$ factors in the total work; and moreover, Goodrich's result~\cite{Goodrich-spaa11} 
requires cache-awareness. Additionally, {\sc Butterfly-Sort} is a
conceptually simple algorithm and it is fairly straightforward to map it into a data oblivious algorithm matching
the current best sorting bound on other models such as BSP and PEM 
(Parallel External Memory)~\cite{AGNS08}.
We now state our 
sorting result.
\begin{theorem}[{\sc Butterfly-Sort}]\label{thm:sort}
Let $B$ denote the block size and $M$ denote the cache size.
Under the standard tall cache assumption
$M = \Omega (B^{1+\epsilon})$, and $M= \Omega (\log^{1+\epsilon} n)$ where
$\epsilon \in (0, 1)$ is an arbitrarily small constant, 
{\sc Butterfly-Sort}
 is a cache-agnostic CREW binary fork-join algorithm 
that obliviously sorts an array of size $n$
with an optimal cache complexity of $O((n/B) \cdot \log_M n)$,
optimal total work $O(n \log n)$, and $O(\log n \cdot \log\log n)$ span
which matches the best known non-oblivious algorithm in the same model.
\end{theorem}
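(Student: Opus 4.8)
The plan is to treat the three claims --- correctness, the resource bounds, and obliviousness --- separately, exploiting the two-phase structure of {\sc Butterfly-Sort}, which first runs {\sc B-RPermute} to scramble the array and then runs SPMS on the result. Correctness is immediate: {\sc B-RPermute} merely applies a permutation to the array positions, so running any correct sorting routine (here SPMS) on its output returns the sorted array. Likewise, being CREW and cache-agnostic will be essentially syntactic: {\sc B-RPermute} walks a fixed butterfly topology, whose only concurrency is that a butterfly node may be read by its two successors (a concurrent read, which is permitted) with all writes exclusive, and it never references $B$ or $M$; and SPMS is CREW and cache-agnostic by \cite{CR17}.

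For the resource bounds, the key point is that SPMS \emph{already} achieves $O(n\log n)$ work, $O((n/B)\log_M n)$ cache complexity under the stated tall-cache assumption, and $O(\log n\log\log n)$ span \cite{CR17}; so it suffices to show {\sc B-RPermute} is no more expensive than this on each metric and then add. I would bound {\sc B-RPermute} through the familiar FFT-style recursion for a butterfly on $n = 2^d$ inputs: the first $d/2$ levels decompose into $\sqrt n$ disjoint butterflies on $\sqrt n$ elements, a transpose-like regrouping step rearranges the array, and the last $d/2$ levels decompose into another $\sqrt n$ butterflies on $\sqrt n$ elements. This gives $W(n) = 2\sqrt n\, W(\sqrt n) + O(n)$, $Q(n) = 2\sqrt n\, Q(\sqrt n) + O(n/B)$ (with $Q(n) = O(1 + n/B)$ once a subproblem fits in cache), and $S(n) = 2\,S(\sqrt n) + O(\log n)$, the additive span term covering both the $\sqrt n$-way fork tree (depth $O(\log n)$ in the binary fork-join model) and the transpose (span $O(\log n)$). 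Substituting $n = 2^{2^k}$ and tracking the $2^j$-fold growth in the number of subproblems at recursion depth $j$, these solve to $W(n) = O(n\log n)$, $Q(n) = O((n/B)\log_M n)$ --- here the tall-cache assumption $M = \Omega(B^{1+\epsilon})$ is exactly what makes each transpose (and base case) cost only $O(n/B)$ misses cache-agnostically --- and $S(n) = O(\log n\log\log n)$, all matching SPMS.

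The obliviousness claim, which I expect to be the real obstacle, is a composition argument. The access pattern of {\sc Butterfly-Sort} is the concatenation of the pattern emitted during {\sc B-RPermute} and that emitted during SPMS. The first is independent of the input values --- indeed, since the butterfly topology is fixed, {\sc B-RPermute}'s access sequence is a function of $n$ alone and in particular reveals nothing about which permutation it applied. For the second, SPMS is comparison-based, so its access pattern is a (randomized, via fresh independent coins) function of the \emph{rank vector} of its input; hence it is enough to show that the rank vector of {\sc B-RPermute}'s output --- which is the composition of the input's rank vector with the random permutation realized by {\sc B-RPermute} --- has a distribution (statistically) independent of the input. This is precisely where the work lies: a single random-switch butterfly realizes only $n^{n/2}$ of the $n!$ permutations, so its output rank vector is \emph{not} statistically close to uniform, and one must argue that {\sc B-RPermute}'s butterfly-guided procedure --- suitably augmented (\eg with random $\omega(\log n)$-bit tags, or with repeated passes) --- nonetheless produces a permutation within negligible statistical distance of the uniform one. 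Granting that, composing with any fixed rank vector keeps us within negligible distance of uniform, hence input-independent; and composing a perfectly-oblivious first phase with a phase that is statistically oblivious on {\sc B-RPermute}'s output yields that {\sc Butterfly-Sort} is statistically oblivious overall. Nailing down this statistical-distance bound for {\sc B-RPermute}, and confirming that it leaks nothing beyond it, is the technical heart of the proof.
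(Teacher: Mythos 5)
Your high-level decomposition coincides with the paper's: {\sc Butterfly-Sort} is {\sc B-RPermute} followed by SPMS; correctness and the CREW/cache-agnostic properties are inherited from the two phases; the resource bounds reduce to showing the permutation phase costs no more than SPMS; and obliviousness follows by composing a fixed-access-pattern permutation phase with the observation of Asharov et al.~\cite{bucketosort} and Chan et al.~\cite{cacheoblosort} that a comparison-based sort applied to a (near-)uniformly permuted array reveals nothing beyond the input length. The FFT-style recurrences of the form $f(n) = 2\sqrt{n}\,f(\sqrt{n}) + (\text{additive term})$ are also the right shape and do appear in the paper, both for {\sc Rec-ORBA} and for {\sc Bitonic-Merge}.

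The genuine gap is the one you flag yourself at the end: you never construct or analyze the object on which the whole theorem rests, namely an oblivious random permutation achieving the stated work, cache, and span bounds. The paper's {\sc B-RPermute} is not an element-level butterfly with random $2\times 2$ switches (which, as you correctly note, realizes only $n^{n/2}$ permutations and cannot be statistically close to uniform); it is an oblivious random bin assignment ({\sc Rec-ORBA}): each element draws a uniform destination among $\beta = 2n/Z$ bins of capacity $Z = \Theta(\log^2 n)$, bins are padded with filler elements so the adversary never observes a bin's true load, and elements are routed through a $\gamma = \Theta(\log n)$-way butterfly over \emph{bins}, where each $\gamma$-way distribution step is realized by $O(1)$ invocations of AKS~\cite{aks} on $\Theta(\log^3 n)$ elements. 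This changes your recurrences in substance even though they solve to the same bounds: the recursion variable is the number of bins rather than the number of elements, the base case has span $O(\log^2\log n)$ (AKS under binary forking) rather than $O(1)$, the per-level work is $O(n\log\log n)$ over $\Theta(\log n/\log\log n)$ levels, and the hypothesis $M = \Omega(\log^{1+\epsilon} n)$ --- which your argument never uses --- is precisely what makes each polylogarithmic-size base case fit in cache so that the cache bound meets $O((n/B)\log_M n)$. Two further pieces you would still need: (i) an overflow analysis showing that no bin exceeds its capacity except with probability $\exp(-\Omega(\log^2 n))$, since overflow both loses elements and is the sole source of the statistical-distance slack in the obliviousness claim; and (ii) the post-processing step --- assign each element a fresh $(\log n\cdot\log\log n)$-bit random label, sort each bin by label with bitonic sort, and strip the fillers --- which is what upgrades ``random bin assignment'' to ``uniformly random permutation.'' Without (i), (ii), and the padded-bin routing itself, both the obliviousness claim and the cost bounds for the permutation phase remain unsupported.
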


\paragraph{Practical and EREW variants.}
We devise a conceptually simple  algorithm, {\sc Butterfly-Random-Sort (B-RSort)}, to sort
an input array {\it that has been randomly permuted}. 
This algorithm uses a collection of pivots, and 
is similar in structure to our {\sc B-RPermute} algorithm.
By replacing SPMS with {\sc B-RSort} to sort the randomly permuted output of {\sc B-RPermute},
we obtain a simple oblivious scheme for sorting, which we call {\sc Butterfly-Butterfly-Sort (BB-Sort)}. 
By varying the primitives used within {\sc BB-Sort}  we obtain
two useful versions: a potentially practical sorting algorithm which uses bitonic sort for small sub-problems, and an efficient EREW binary fork-join version
which retains AKS sorting for small subproblems. 

Our practical version uses bitonic sort as the main primitive, and for this we present an EREW  binary fork-join 
bitonic sort algorithm that improves on the na\"ive binary fork-join version by 
achieving span $O(\log^2 n\cdot \log\log n)$ and cache-agnostic caching
cost $O((n/B)  \cdot \log_M n \cdot \log (n/M))$ while retaining its $O(n \log^2 n)$ work. 

The use of AKS in our EREW and CREW algorithms may appear impractical.
However, it is to be observed that prior $O(n \log n)$-work oblivious algorithms 
for sorting --- the AKS network~\cite{aks}, Zigzag sort~\cite{zigzag}, and an $O(n \log n)$ version of oblivious bucket sort~\cite{bucketosort} ---
all use expanders within their 
construction. Sorting can be performed using an efficient oblivious priority queue that does not use 
expanders~\cite{pathoheap,opqkasper}; however this 
incurs $\omega(n \log n)$ work to achieve a negligible in $n$ failure probability.
Further, this method is not cache-efficient and is inherently sequential.
 It is to be noted that our practical version does not use AKS or expanders.
 
 Table~\ref{tab:sort} in Section~\ref{sec:practical} lists the bounds for our sorting algorithms.

\begin{table*}[t]
\centering
\caption{ {\bf Comparison with prior insecure algorithms.}
\label{tab:compare}
$\widetilde{O}(\cdot)$ {\rm hides a single $\log \log n$ factor.
LR = ``list ranking'', ET-Tree = ``Tree computations with Euler tour'', TC = ``Tree contraction'', 
CC = ``connected components'', MSF = ''minimum spanning forest''. 
For graph problems, $n$ is the number of vertices, and $m=\Omega(n)$ is the number of edges.
We compare with insecure cache-efficient CREW binary fork-join algorithms.
The prior bounds, except for tree contraction, are from~\cite{CR12}, and implicit in other work~\cite{lowdepthco,CRSB13}. The prior
result for sort is  SPMS sort~\cite{vijayasort,CR17}.
The prior bound for tree contraction (TC) is from~\cite{lowdepthco}. A `$^\dagger$' next to a result indicates that we improve the 
performance relative to  
the best known bound for the insecure case.
If cache-efficiency is not considered for insecure algorithms, the best CREW binary fork-join span for all tasks except Sort is
$O(\log^2 n)$; for  Sort it remains $\widetilde{O}(\log n)$.}
}
{\small
\begin{tabular}{c|ccc|ccc}
\toprule
\multirow{2}{*}{Task} & \multicolumn{3}{c|}{Our data-oblivious algorithm} & \multicolumn{3}{c}{Previous best insecure algorithm}\\
                      &  work & span & cache  &  work & span & cache\\
\midrule
Sort &  $O(n \log n)$  & $\widetilde{O}(\log n)$ & $O(\frac{n}{B} \log_M n)$ &
 $O(n \log n)$  & $\widetilde{O}(\log n)$ & $O(\frac{n}{B} \log_M n)$\\
LR
&  $O(n \log n)$  & $\widetilde{O}(\log^2 n)$ & $O(\frac{n}{B} \log_M n)$ &
 $O(n \log n)$  & $\widetilde{O}(\log^2 n)$ & $O(\frac{n}{B} \log_M n)$\\
ET-Tree &  $O(n \log n)$  & $\widetilde{O}(\log^2 n )$ & $O(\frac{n}{B} \log_M n)$ &
 $O(n \log n)$  & $\widetilde{O}(\log^2 n)$ & $O(\frac{n}{B} \log_M n)$\\
TC$^\dagger$
&  $O(n \log n)$  & $\widetilde{O}(\log^2 n)$ & $O(\frac{n}{B} \log_M n)$ &
 $O(n \log n)$  & $\widetilde{O}(\log^3 n)$ & $O(\frac{n}{B} \log_M n)$\\
CC$^\dagger$
&  $O(m \log^2 n)$  & $\widetilde{O}(\log^2 n)$ & $O(\frac{m}{B} \log_M n \log n)$ &
 $O(m \log^2 n)$ & $\widetilde{O}(\log^3 n)$ & $O(\frac{m}{B} \log_M n \log n)$\\
MSF$^\dagger$
&  $O(m \log^2 n)$  & $\widetilde{O}(\log^2 n)$ & $O(\frac{m}{B} \log_M n \log n)$ &
 $O(m \log^2 n)$ & $\widetilde{O}(\log^3 n)$ & $O(\frac{m}{B} \log_M n \log n)$\\
\bottomrule
\end{tabular}
}
\end{table*}

\begin{table*}[t]
\caption{\label{tab:oblcompare} {\bf Comparison with prior oblivious algorithms.}
{\rm The prior best is obtained by taking the best known oblivious PRAM algorithm 
and na\"ively fork and join $n$ threads at every PRAM step.
$\widetilde{O}$ hides a single $\log \log n$ or $\log\log s$ factor.
Aggr = aggregation, 
Prop = propagation, 
S-R = send-receive, 
PRAM = oblivious simulation of a $p$-processor,
$s$-space PRAM (cost of simulating a single step, assuming $s \geq p$), and 
$\bigstar = O(\log s \cdot ((p/B) \cdot \log_M p  + p \cdot \log_B s))$.
The best known algorithm for aggregation and propagation 
are due to \cite{graphsc,circuitopram}, 
send-receive is obtained by combining \cite{graphsc,circuitopram}, 
\cite{aks}, and \cite{cacheoblosort}, 
oblivious simulation of PRAM is due to or implied by \cite{opram,opramdepth}.}
}
{\small
\begin{tabular}{c|ccc|ccc}
\toprule
Obliv.& \multicolumn{3}{c|}{Our algorithm} & \multicolumn{3}{c}{Prior best}\\
     Alg.                 &  work & span & cache  &  work & span & cache\\
\midrule
Aggr & $O(n)$ & $O(\log n)$ & $O(n/B)$ & $O(n)$ & $O(\log^2 n)$ & $O(n/B)$ \\
\midrule
Prop & $O(n)$ & $O(\log n)$ & $O(n/B)$ & $O(n)$ & $O(\log^2 n)$ & $O(n/B)$ \\
\midrule
\multirow{2}{*}{Sort \& S-R} &           
 \multirow{2}{*}{$O(n \log n)$}  & \multirow{2}{*}{$\widetilde{O}(\log n)$} & 
\multirow{2}{*}{$O(\frac{n}{B} \log_M n)$} &
 $O(n \log n)$  & $O(\log^2 n)$ & $O(n \log n)$\\
& & & & ${O}(n \log n \log^2 \log n)$ & $O(n^\epsilon)$ & $O(\frac{n}{B} \log_M n)$\\
\midrule
\multirow{2}{*}{PRAM}& $O(s \log s)$& $\widetilde{O}(\log s)$ & $O(\frac{s}{B} \log_M s)$
& $O(s\log^2 s)$ & $O(\log^2 s)$ & $O(s \log s)$ \\ 
& $O(p \log^2 s)$ & $\widetilde{O}(\log s)$ &  
$\bigstar$ in caption
& $O(p \log^2 s)$ & $\widetilde{O}(\log^2 s)$ & $O(p \log^2 s)$\\
\bottomrule
\end{tabular}
}
\end{table*}

\paragraph{Data-oblivious simulation of PRAM in CREW binary fork-join.} 
Using our sorting algorithm,
we show how to  
compile any CRCW PRAM algorithm to a data-oblivious, cache-agnostic
binary-fork algorithm with non-trivial efficiency.
We present two results along these lines. The first is a compiler that works
efficiently for space-bounded PRAM programs, i.e.,  when the space
$s$ used is close to the number of processors $p$.  
We argue that this is an important special case because 
our space-bounded simulation of CRCW PRAM on oblivious binary fork-join ({\it space-bounded PRAM on OBFJ})
allows us to derive oblivious binary fork-join algorithms  
for several computational tasks
that are cornerstones of the parallel algorithms literature.
Our space-bounded PRAM on OBFJ result is given in Theorem~\ref{thm:pram-emul} in Section~\ref{sec:pram-sim}. As stated there,
each step of a $p$-processor CRCW PRAM can be emulated within the work, span, and cache-agnostic caching bounds for
sorting $O(p)$ elements.

Our second PRAM on OBFJ result works for the general case  
when the space $s$ consumed by the PRAM can be much greater than (e.g., a polynomial
function in) the number of processors $p$.
For this setting, we show a result that strictly generalizes  
the best known 
Oblivious Parallel RAM (OPRAM) construction~\cite{circuitopram,opramdepth}.
Specifically, we can compile any CRCW PRAM to a data oblivious, cache-agnostic, 
binary fork-join 
program where each parallel step in the original PRAM 
can be simulated with $p \log^2 s$ total work 
and $O(\log s \cdot \log \log s)$ span. The exact statement is in Theorem~\ref{thm:general-CRCW-on-obfj} in
Section~\ref{sec:opram}.
In terms of work and span, the bounds in this result
match the asymptotical performance of the best prior  
OPRAM result~\cite{circuitopram,opramdepth}, which is a PRAM on oblivious PRAM result, and
hence would only imply results for unbounded forking.\footnote{The concurrent work 
by Asharov et al.~\cite{optimuspram}
shows that assuming the existence of one-way functions, each parallel step
of a CRCW PRAM can be obliviously simulated 
in $O(\log s)$ work and $O(\log s)$
parallel time on a {\it CRCW} PRAM.
Their work is of a different nature because they consider computational security 
and moreover, their target PRAM allows concurrent writes.}.
Moreover, we also give explicit cache complexity bounds for our oblivious simulation
which was not considered in prior work~\cite{circuitopram,opramdepth}.

To obtain the above PRAM simulation 
results, we use some
building blocks that are core to the oblivious
algorithms literature, called {\it aggregation, propagation,}
and {\it send-receive}~\cite{graphsc,opram,circuitopram}. 
Table~\ref{tab:oblcompare}
shows the performance bounds of our algorithms for these building blocks
and general PRAM simulation vs. 
the best known {\it data-oblivious} results, where the latter  
is obtained by taking the best known oblivious PRAM algorithm and na\"ively 
forking $n$ threads in a binary-tree fashion
for every PRAM step. The table shows that we  
improve over the prior best results for all tasks.

\paragraph{Applications.}
There is a very large
collection of efficient PRAM algorithms in the literature for a variety of
important problems.
We use our PRAM simulation results 
as a generic way of translating some of these algorithms
into efficient data-oblivious algorithms in binary fork-join.
For other important computational problems we use our new sorting algorithm as a building block to directly obtain
efficient data-oblivious algorithms.

We list our results in Table~\ref{tab:compare}
and compare with results for insecure cache-efficient CREW binary fork-join algorithms. Our results for 
tree contraction (TC), connected components (CC), and minimum spanning forest (MSF)
are obtained through our space-bounded PRAM on OBFJ result, and in all three cases
our data-oblivious binary fork-join algorithms improve the commonly-cited span by a $\log n$ factor 
while matching the total work and cache complexity of the best known insecure algorithms.
For other computational tasks such as list ranking and rooted tree computations with Euler tour,
we design
data-oblivious algorithms that match the best work, span and cache-efficiency bounds known for insecure algorithms in the binary fork-join model: these give better bounds than what
we would achieve with the PRAM simulation.  

\elaine{TODO: add some technical highlight to say that one technique
an efficient bfj implementation of a butterfly network structure}

Throughout, we allow our algorithms
to have only an extremely small failure probability (either in correctness
or in security) that is $o(1/n^k)$, for any constant $k$. We refer to such a probability as being
{\it negligible} in $n$. Such strong bounds 
are typical for cryptographic and security applications, and are stronger than the w.h.p. bounds
(failure probability  $O(1/n^k)$, for any constant $k$) commonly used for standard (insecure) randomized algorithms.

\paragraph{Road-map.} The rest of the paper is organized as follows. The rest of this section  
has short backgrounds on
multithreaded, cache-efficient and data-oblivious  algorithms.
Section~\ref{sec:orba-sort} is on {\sc Butterfly-Random-Permute} and {\sc Butterfly-Sort}, our most efficient data oblivious
sorting algorithm. Section~\ref{sec:practical} is on 
{\sc Butterfly-Random-Sort} which gives our simple oblivious algorithm for sorting a random permutation, and {\sc Butterfly-Butterfly-Sort} 
which gives our 
practical and EREW sorting algorithms. Section~\ref{sec:pram-sim} has PRAM simulations, and
Section~\ref{sec:app-main} has applications. Many details are 
in the appendices.

\subsection{Binary Fork-Join Model}
\label{sec:prelim}

For parallelism 
 we consider  a multicore environment consisting of
$P$ cores (or processors), each with a private cache of size $M$.
The processors communicate through an arbitrarily large shared memory.
Data is organized in blocks (or `cache lines') of size $B$.

We will  express parallelism using a multithreaded model  through paired fork and join operations
 (See Chapter 27 in Cormen et al.~\cite{clr} for an introduction to this model).
The execution of a binary fork-join algorithm starts at a single processor. 
A thread can rely on a {\it fork} operation to
spawn two tasks; and the forked parallel tasks can be executed by
other processors, as determined by a scheduler. 
The two tasks
will later {\it join}, and the 
corresponding join serves as a
synchronization point: both of the spawned tasks must complete before the computation
can proceed beyond this join. 
The memory accesses are CREW: concurrent reads are allowed at a memory location but no concurrent writes at a location.

The {\it work} of a multithreaded algorithm ${\sf Alg}$
 is the total number of operations it executes; equivalently,
it is the sequential time when it is executed on one processor. The {\it span} of ${\sf Alg}$, often called
its  
critical path length or parallel time $T_{\infty}$, is the number of parallel steps in the algorithm.
The cache complexity of a binary fork-join algorithm is the total number of cache misses incurred across
all processors during the execution. 

For cache-efficiency, we  will use the  cache-oblivious model in Frigo et al.~\cite{frigo1999cache}.
 As noted earlier, in this paper we will use the term {\it cache-agnostic} in place of
cache-oblivious, and reserve the term `oblivious' for data-obliviousness.
We assume a cache of size $M$  partitioned into blocks (or cache-lines) of size $B$.
 We will use $Q_{sort}(n) = \Theta ((n/B) \cdot \log_M n))$ to denote the optimal caching bound for sorting $n$ elements,
 and for a cache-agnostic algorithm to achieve this bound we need $M= \Omega (B^2)$ (or $M=\Omega (B^{1+\delta})$
 for some given arbitrarily small constant $\delta>0$).
 Several sorting algorithms with optimal work (i.e., sequential time) and cache-agnostic caching cost are known, including two
in~\cite{frigo1999cache}.

We would like to design binary fork-join algorithms with small work, small sequential caching cost, 
preferably cache-agnostic, and small span. As explained in Appendix~\ref{sec:cache-eff-bfork},
this will lead to good parallelism and cache-efficiency in an execution under randomized work stealing~\cite{bfork04}.

Let $T_{\rm sort}(n)$ be the smallest span for a binary fork-join algorithm that sorts $n$ elements with a comparison-based sort.
It is readily seen that $T_{\rm sort}(n) = \Omega (\log n)$.
The current best binary fork-join algorithm for sorting is SPMS (Sample Partition Merge Sort)~\cite{CR17}. 
This algorithm has span $O(\log n \cdot \log \log n)$ with
optimal work $W_{\rm sort} = O(n \log n)$ and optimal cache complexity $Q_{\rm sort}(n)$. But the SPMS algorithm is not data-oblivious.
No non-trivial data oblivious parallel 
sorting algorithm with optimal cache complexity was known prior to the results we present in this paper.

\hide{
\paragraph{Additional discussion on the modeling.}
One can consider multi-way forking in place of requiring binary forking, but in this paper we choose to
stay with binary fork-joins since multithreaded algorithms with nested binary fork-joins have
good scheduling results under randomized work-stealing~\cite{bfork04,bfork01,CR13}.
One can also consider recent extensions made to this model such as using atomics (e.g., {\tt test-and-set})  to achieve synchronizations outside of
joins complementing forks as in the recent binary forking model~\cite{optimalbfork}, or using 
concurrent writes~\cite{optimalbfork, AC+20} to reduce the span. 
We will not use this extra power in our algorithms. So, for the rest of this paper, 
we will deal with
a CREW multithreaded model with binary fork-joins: memory accesses are CREW; 
the forks expose parallelism; synchronization occurs only at joins;
and any pair of fork-join computations are either disjoint or one is nested within the other. 
We  will refer to such a computation as a {\it binary fork-join} algorithm, or simply a multithreaded algorithm.
}

More background on caching and multithreaded computations is given in Appendix~\ref{sec:comp-model}.

\subsection{Data Oblivious Binary Fork-Join Algorithms}
\label{sec:do-short}

Unlike the terminology ``cache obliviousness'',
data obliviousness captures the privacy requirement of a program.
As mentioned in the introduction,  we consider a multicore secure processor like Intel's SGX (with hyperthreading),  
and all data is encrypted to the secure processor's secret key at rest 
or in transit. 
Therefore, the adversary, e.g., a compromised operating system or a malicious
system administrator with physical access to the machine, 
cannot observe the contents of memory.
However, the adversary may control the scheduler that
schedules threads onto cores.
Moreover, it can observe 
1) the  computation DAG 
that captures the pattern of forks and joins, and 
2) the memory addresses accessed by all threads of the binary fork-join program.
The above observations 
jointly form the ``access patterns'' of the binary fork-join program.

We adapt the standard definition of data obliviousness~\cite{oram00,oram10,circuitopram} 
to the binary fork-join setting.

\begin{definition}[Data oblivious binary fork-join algorithm]
We say that a 
binary fork-join algorithm ${\sf Alg}$ data-obliviously realizes (or obliviously realizes)
a possibly randomized functionality
$\mcal{F}$,  iff there exists a simulator ${\sf Sim}$ such that 
for any input ${\bf I}$, 
the following two distributions have negligible statistical distance:
1) the joint distribution of ${\sf Alg}$'s output
on input ${\bf I}$ and the access patterns; and 2) 
$(\mcal{F}({\bf I}), \Sim(|{\bf I}|))$.
\end{definition}
For example, in an oblivious random permutation,
the ideal functionality $\mcal{F}$ is the functionality which, upon
receiving an input array, outputs a random permutation of the input.
Note that the simulator $\Sim$ knows only the length of the input ${\bf I}$
but not the contents of ${\bf I}$, i.e., the access patterns are simulatable
without knowing the input.
For randomized functionalities (e.g., random permutation),
it is important that the {\it joint} distribution of the output and access patterns 
(often called the ``real world'')
is statistically close to the {\it joint} distribution of the functionality's output
and simulated access patterns (often called the ``ideal world'').
Observe that in the ideal world, the functionality $\mcal{F}$ uses independent
coins from the simulator $\Sim$. This implies that in the real world,
the access patterns 
should be (almost) independent
of the computation outcome (e.g., the output permutation).

We stress that our notion of data obliviousness continues to hold if the 
adversary can observe the exact address a processor accesses,
even if that data is in cache.
In this way, our security notion  
can rule out attacks that try to glean secret information 
through 
many types of cache-timing attacks.
Our notion secures even against computationally unbounded adversaries, often
referred to as {\it statistical security} in the cryptography literature.

One sometimes inefficient way to design binary-fork-join algorithms 
is to take a Concurrent-Read-Exclusive-Write (CREW) PRAM algorithm,  
and simply fork $n$ threads in a binary-tree fashion to simulate every step
of the PRAM. If the original PRAM has $T(n)$ parallel runtime and $W(n)$ work,
then the same program has $T(n) \cdot \log n$ span and $W(n)$ work in a binary-fork-join model. 
Moreover, if the algorithm  
obliviously realizes some functionality $\mcal{F}$ on a CREW PRAM, 
the same algorithm obliviously realizes $\mcal{F}$ in a binary-fork-join model too. 
More details on data obliviousness are in Appendix~\ref{sec:d-o}.

\section{{\sc Butterfly-Sort}}\label{sec:orba-sort}

To sort the input array,
our CREW binary fork-join algorithm {\sc Butterfly-Sort}  first 
applies an oblivious random permutation to the input elements,
and then apply any (insecure) {\it comparison-based} sorting algorithm
such as SPMS~\cite{CR17} to the permuted array. It is shown in~\cite{bucketosort}
that this will give an oblivious sorting algorithm.

Our random permutation algorithm, {\sc Butterfly-Random-Permute (B-RPermute)} 
builds on an algorithm  for this
problem given in Asharov et al.~\cite{bucketosort} 
 to obtain improved bounds as well as cache-agnostic
cache-efficiency. To attain an oblivious random permutation, a key step in~\cite{bucketosort}
is to randomly assign elements to bins of $Z = \omega(\log n)$ 
capacity but without leaking the bin choices --- we call
this primitive {\it oblivious random bin assignment (ORBA)}. 
For simplicity, we shall assume $Z = \Theta(\log^2 n)$.

In Section~\ref{sec:summarymetaorba} we describe our improvements to the ORBA algorithm 
in~\cite{bucketosort}. In Section~\ref{sec:summarysort} we describe {\sc Butterfly-Random-Permute}
and {\sc Butterfly-Sort}.
\hide{
We now give an overview of our algorithm for ORBA, and then our overall oblivious random 
permutation algorithm. We first present {\sc Meta-ORBA}, which does not consider cache-efficiency,
and then show that its implementation as {\sc Rec-ORBA} in Section~\ref{sec:summaryrecorba} gives
the desired performance bounds.
}

\subsection{Oblivious Random Bin Assignment: ORBA }
\label{sec:summarymetaorba}

We first give a brief overview of an
 ORBA algorithm by Asharov et al.~\cite{bucketosort}.
By using an algorithm for parallel compaction in~\cite{paracompact},
this ORBA algorithm in~\cite{bucketosort} can be made to
run with $O(n \log n)$ work and in $O(\log n \cdot \log\log n)$ parallel time on an EREW PRAM (though not on binary fork-join).
In this ORBA algorithm~\cite{bucketosort}, the $n$ input elements are divided
into $\bins = 2n/Z$ bins and each bin is then padded with $Z/2$ 
filler elements to a capacity of $Z$. 
We assume that 
$\bins$ is a power of 2.
Each real element in the input chooses a random label from the range $[0, \bins-1]$ 
which expresses the desired bin. The elements are then  
routed over a 2-way butterfly network of $\Theta(\log n)$ depth, and in each layer
of the butterfly network there are $\bins$ bins, each of capacity $Z$. 
In every layer $i$ of the network, input bins from the previous layer
are paired up in an appropriate manner, and the real elements  
contained in the two input bins 
are obliviously distributed to two output bins in the next layer, based on the $i$-th bit of
their random labels. For obliviousness, the output bins 
are also padded with filler elements to a capacity of $Z$.

Our `meta-algorithm' {\sc Meta-ORBA} saves an $O(\log\log n)$ factor in the  PRAM running time
while retaining the $O(n \log n)$ work.
For this we use a $\gamma = \Theta(\log n)$-way butterfly network, with $\gamma$ a power of 2, 
rather than $2$-way. Therefore,  
in each layer of the butterfly network, 
groups of $\Theta(\log n)$ input bins
from the previous layer 
are appropriately chosen, 
and the real elements in the $\log n$ input bins are obliviously distributed
to $\Theta(\log n)$ output bins in the next layer, 
based on the next unconsumed $\Theta(\log \log n)$ bits
in their random labels. 
Again, all bins are padded with filler elements to a capacity of $Z$
so the adversary cannot tell the bin's actual load.
To perform this $\Theta(\log n)$-way distribution obliviously, 
it suffices to invoke the AKS construction~\cite{aks} 
$O(1)$ number of times.

We now analyze the performance of this algorithm.

\boi
\item{\it  For $O(1)$ invocations of AKS on $\Theta(\log n)$ bins each of size $Z = \Theta(\log^2 n)$:}
 Total work is $O(\log^3 n \cdot \log \log n)$ and the parallel time on an EREW PRAM is $O(\log\log n)$.

\item  {\it For computing a single layer, 
with $\bins /\Theta(\log n)= 2n/\Theta(\log^3 n)$ subproblems:} 
 Total work for a single layer is  $O(n\cdot \log \log n)$ and EREW PRAM time remains $O(\log\log n)$.

\item The  $\Theta(\log n)$-way butterfly network has  
$\Theta(\log n/\log\log n)$ layers, 
hence {\sc Meta-ORBA}  performs $O(n\log n)$ total work, and runs in
$O(\log n)$ parallel time on an EREW PRAM.

\eoi

\paragraph{Overflow analysis.}
Our {\sc Meta-ORBA} algorithm has deterministic data access patterns
that are independent of the input.
However, if some bin overflows its capacity due to the random label assignment, 
the algorithm can lose real elements  
during the routing process.
Asharov et al.~\cite{bucketosort}
showed that for the special case $\gamma = 2$, 
the probability of overflow is upper bounded by $\exp(-\Omega(\log^2 n))$
assuming that the bin size $Z = \log^2 n$.
Suppose now that $\gamma = 2^r$, then  
the elements 
in the level-$i$ bins
in our ${\text{\sc Meta-ORBA}}$ algorithm correspond exactly to the elements 
in the level-$(r \cdot i)$ bins in  the ORBA algorithm in~\cite{bucketosort}.
Since the failure probability $\exp(-\Omega(\log^2 n))$
is negligibly small in $n$, we have that for $Z = \log^2 n$ and $\gamma = \Theta(\log n)$, 
our  ${\text{\sc Meta-ORBA}}$
algorithm obliviously realizes random bin assignment 
on an EREW PRAM with $O(n \log n)$ 
total work and $O(\log n)$ parallel runtime with negligible error.

We have now improved the parallel runtime by a $\log \log n$ factor
relative to Asharov et al.~\cite{bucketosort} (even when compared
against an improved version of their algorithm that uses 
new primitives such as parallel compaction~\cite{paracompact}) but this is on an EREW PRAM.
In the next section we translate this
  into  {\sc Rec-ORBA}, our cache-agnostic, binary fork-join implementation. 

\subsubsection{{\sc Rec-ORBA}}
\label{sec:summaryrecorba}

We will assume a tall cache ($M = \Omega (B^2)$) for {\sc Rec-ORBA}, our  recursive cache-agnostic, binary fork-join implementation of our {\sc Meta-ORBA} algorithm. We
will also assume $M= \Omega (n^{1 + \epsilon})$ 
for any given arbitrarily small constant $\epsilon >0$. In the following
we will use $\epsilon = 2$ for simplicity, i.e.,  $M = \Omega (\log^3 n)$.

In  {\sc Rec-ORBA},  we 
implement {\sc Meta-ORBA} by recursively solving $\sqrt \bins$ subproblems, 
each with $\sqrt \bins$
bins: in each subproblem, we shall obliviously distribute the  
real elements in the $\sqrt \bins$ 
input bins into $\sqrt \bins$
output bins, using the $(1/2) \log \bins$ most significant bits (MSBs) in the labels.
After this phase, we use a matrix transposition to bring 
the $\sqrt \bins$ 
bins with the same $(1/2) \cdot \log \bins$ MSBs together --- these
$\sqrt \bins$  bins now belong to the same subproblem for the next phase. 
where 
we recursively solve each subproblem defined above: for each subproblem, we 
distribute the $\sqrt \bins$  bins into $\sqrt \bins$  output bins
based on the $(1/2) \cdot \log \bins$ least significant bits
in the elements' random labels. 
The pseudocode is below.

\begin{mdframed}
\begin{center}
${\text{\sc Rec-ORBA}}^{\gamma, Z}({\bf I}, s)$
\end{center}
\noindent {\bf Input:}
the input contains an array ${\bf I}$
containing $n = \bins \cdot Z$ elements where $\bins$ is assumed to be a power of $2$.
Each element in ${\bf I}$ is either {\it real} or a {\it filler}.
The input array ${\bf I}$ can be viewed as $\bins$ bins each of size $Z$. 
It is promised that each bin has at most $Z/2$ real elements.
Every real element in ${\bf I}$ has a label that determines
which destination bin the element wants to go to.

\vspace{5pt}
\noindent {\bf Algorithm:}

If $\bins \leq \gamma$,  invoke an instance of oblivious bin placement
(Section~\ref{sec:oblivbldgblock})
to assign the input array ${\bf I}$ to a total of $\bins$ bins
each of capacity $Z$. 
Here, each element's destination bin 
is determined by the $s$-th to $(s+\log_2 \bins - 1)$-th bits of its label.
Return the resulting list of $\bins$ bins.

Else, proceed with the following recursive algorithm. 
\begin{itemize}[leftmargin=5mm,itemsep=1pt,topsep=1pt]
\item 
Let $\bins_1$ be $\sqrt{\bins}$ rounded up to the nearest power of 2.
Divide the input array ${\bf I}$ into 
$\bins_1$ partitions where each partition 
contains exactly $\bins_2 := \bins/\bins_1$ consecutive bins.
Note that if $\bins$ is a perfect square, $\bins_1 = \bins_2= \sqrt{\bins}$.
Henceforth let ${\bf I}^j$ denote the $j$-th partition where $j \in [\bins_1]$.
\item 
In parallel\footnote{The ``for'' loop is executed in parallel, 
and in the binary fork-join model, a $k$-way parallelism is achieved
by forking $k$ threads in a binary-tree fashion in $\log_2 k$ depth.
In the rest of the paper, we use the same convention when writing pseudo-code
for binary fork-join algorithms; 
we will use fork and join in the pseudocode for two-way parallelism.}: 
For each $j \in [\bins_1]$, let ${\sf Bin}_{1}^j, \ldots, {\sf Bin}_{\bins_2}^j 
\leftarrow {\text{\sc Rec-ORBA}}^{\gamma, Z}({\bf I}^j, s)$.
\item 
Let ${\sf Bins}$ be a ${\bins_1} \times {\bins_2}$ matrix
where the $j$-th row is the list of bins ${\sf Bin}_{1}^j, \ldots, {\sf Bin}_{\bins_2}^j$.
Note that each element in the matrix is a bin.
Now, perform a matrix transposition: ${\sf TBins} \leftarrow {\sf Bins}^T$.\\
Henceforth, 
let ${\sf TBins}[i]$ denote the $i$-th row of ${\sf TBins}$ consisting of ${\bins_2}$ bins, 
\item 
In parallel: For $i \in [{\bins_2}]$, 
let $\widetilde{\sf Bin}_{1}^i, \ldots, \widetilde{\sf Bin}_{{\bins_1}}^i  
\leftarrow {\text{\sc Rec-ORBA}}^{\gamma, Z}({\sf TBins}[i], s + \log_2 \bins_2 - 1)$
\item 
Return the concatenation 
$\widetilde{\sf Bin}_{1}^1, \ldots, \widetilde{\sf Bin}_{{\bins_1}}^1$, 
$\widetilde{\sf Bin}_{1}^2, \ldots, \widetilde{\sf Bin}_{{\bins_1}}^2$,  
$\ldots$, 
$\ldots$, 
$\widetilde{\sf Bin}_{1}^{{\bins_2}}$, $\ldots$, $\widetilde{\sf Bin}_{{\bins_1}}^{{\bins_2}}$.
\end{itemize}
\end{mdframed}

Since the matrix transposition for $Y$ bins 
each of capacity $Z = \Theta(\log^2 n)$ can be performed
with $O(Y\cdot Z/B)$ cache misses, and $O(\log (Y\cdot Z)) = O(\log n)$
span, we have the 
following recurrences that characterize the cost of {\sc Rec-ORBA}
where $Y$ denotes the current number of bins, and $Q(Y)$ 
and $T(Y)$ denote
the cache complexity and span, respectively,  to solve a subproblem 
containing $Y$ bins:
\[
\begin{array}{l}
Q(Y) = 2 \sqrt{Y} \cdot Q(\sqrt{Y}) + O((Y \cdot \log^2 n)/B)\\[5pt]
T(Y) = 2 \cdot T(\sqrt{Y}) + O(\log (Y \cdot \log^2n))
\end{array}
\]

The base conditions are as follows.
Since $M=\Omega (\log^3 n)$ each individual 
$\Theta(\log n)$-way distribution 
instance fits in cache and incurs $O((1/B) \log^3n)$ cache misses. 
Hence we have the base case $Q(Y) = O(Y \log^2 n/B)$ when $Y \log^2 n \leq M$.
For the span, each individual
$\Theta(\log n)$-way distribution instance works on 
$O(\log^3 n)$ elements and has $O( \log^2\log n)$ span under
binary forking, achieved by forking and joining $\log^3 n$ tasks at each level 
of the AKS sorting network. 

Therefore, we have that 
 for $\beta = 2n/Z$: 
$Q(\bins) = O((n/B) \log_M n)$
and $T(\bins) = O(\log n\cdot \log\log n)$. 
Finally, if we want $M = \Omega(\log^{1+\epsilon} n)$ rather
than $M = \Omega(\log^3 n)$, we can simply parametrize 
the bin size to be $\log^{1+\epsilon/2}n$
and let $\gamma = \log^{\epsilon/2}n$.
This gives rise to the following lemma.

\begin{lemma}\label{orba-recursive}
Algorithm {\sc Rec-ORBA} has cache-agnostic   caching complexity $O((n/B) \log_M n)$ provided the tall cache has
 $M = \Omega (\log^{1+\epsilon} n)$, for any given
positive constant $\epsilon$. The algorithm performs $O(n \log n)$ work and has span $O(\log n \cdot \log\log n)$.
\end{lemma}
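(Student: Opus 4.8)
The plan is to establish the three claimed bounds --- work $O(n\log n)$, span $O(\log n\cdot\log\log n)$, and cache complexity $O((n/B)\log_M n)$ --- by unrolling the two recurrences displayed above, together with an analogous work recurrence, and by being careful that there are two distinct ``base cases'': the recursion itself bottoms out when $\beta\le\gamma$ and invokes oblivious bin placement, whereas the \emph{cache} recurrence bottoms out earlier, once a subproblem's working set fits in cache. Correctness and obliviousness need not be reproved here: the recursive square-root decomposition used by {\sc Rec-ORBA} is just a reorganization of the $\gamma$-way butterfly of {\sc Meta-ORBA}, so those properties (including the overflow bound) are inherited, and only the three efficiency bounds remain.

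First I would pin down the recursion tree. On $\beta$ bins, {\sc Rec-ORBA} runs two \emph{sequential} phases, each consisting of $\Theta(\sqrt\beta)$ parallel recursive calls on $\Theta(\sqrt\beta)$ bins, separated by one matrix transposition; since $\beta_1$ (and hence $\beta_2=\beta/\beta_1$) is within a factor $2$ of $\sqrt\beta$, the recursion is balanced up to constants. Two bookkeeping facts drive everything. (i)~Along any root-to-leaf path the recursion has depth $k^\ast=\Theta(\log\log n)$, because a subproblem on $\beta^{1/2^k}$ bins reaches the cutoff $\gamma=\Theta(\log n)$ exactly when $2^k=\Theta(\log\beta/\log\gamma)=\Theta(\log n/\log\log n)$. (ii)~The total number of \emph{elements} summed over all subproblems at recursion level $k$ is $\Theta(2^k\cdot\beta Z)=\Theta(2^k n)$, since a size-$Y$ subproblem spawns $\Theta(\sqrt Y)$ subproblems of size $\Theta(\sqrt Y)$. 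I would also record a locality invariant: a size-$Y$ subproblem confines all of its accesses to a contiguous $\Theta(YZ)$-cell region (the partitions ${\bf I}^j$ are contiguous and, after the transposition, so are the rows of ${\sf TBins}$) and uses only $O(YZ)$ working space --- this is what lets the cache recurrence bottom out at $Q(Y)=O(YZ/B)$ once $YZ=O(M)$.

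Then I would unroll the three recurrences. For \emph{work}, from $W(Y)=2\sqrt Y\,W(\sqrt Y)+O(YZ)$ with base $W(Y)=O(YZ\log(YZ))$ for $Y\le\gamma$, fact (ii) gives total transposition work $\sum_{k<k^\ast}\Theta(2^k n)=\Theta(2^{k^\ast}n)=\Theta((n/\log\log n)\log n)=o(n\log n)$, while the base-case work is $O(2^{k^\ast}n\cdot\log(\gamma Z))=O(n\log n)$ because $\log(\gamma Z)=\Theta(\log\log n)$; together $O(n\log n)$. For \emph{span}, unrolling $T(Y)=2T(\sqrt Y)+O(\log(YZ))$ and using that the level-$k$ additive term is $O((\log\beta)/2^k+\log\log n)$,
\[
T(\beta)=\sum_{k=0}^{k^\ast-1}2^k\cdot O((\log\beta)/2^k+\log\log n)\;+\;2^{k^\ast}\cdot O(\log^2\log n)=O(k^\ast\log\beta+2^{k^\ast}\log^2\log n)=O(\log n\cdot\log\log n),
\]
where the base-case span $O(\log^2\log n)$ is the span, under binary forking, of a constant number of AKS networks on $\Theta(\log^3 n)$ elements. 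For \emph{cache}, from $Q(Y)=2\sqrt Y\,Q(\sqrt Y)+O(YZ/B)$ with base $Q(Y)=O(YZ/B)$ once $YZ=O(M)$, fact (ii) makes the per-level transposition cost $\Theta(2^k n/B)$ --- a geometric series dominated by its top term at the level $K$ where $\beta^{1/2^K}Z=\Theta(M)$, i.e.\ $2^K=\Theta(\log\beta/\log(M/Z))$ --- and the base-case contribution is of the same order. The tall-cache hypothesis enters twice here: $M=\Omega(B^2)$ makes the cache-agnostic matrix transposition cost $O(YZ/B)$ even above cache size, and $M=\Omega(\log^{1+\epsilon}n)$ with $Z=\Theta(\log^{1+\epsilon/2}n)$ (so $M=\Omega(Z^{1+\delta})$ for some $\delta>0$) gives $\log(M/Z)=\Theta(\log M)$, whence $2^K=\Theta(\log_M n)$ and $Q(\beta)=O((n/B)\log_M n)$.

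The hard part will be the cache argument, specifically justifying the base case $Q(Y)=O(YZ/B)$ when $YZ=O(M)$: one must show that once a subproblem's working set fits in cache, its \emph{entire} remaining sub-computation --- every deeper transposition and every deeper fork --- incurs only $O(YZ/B)$ misses, rather than paying afresh at each of the $\Theta(\log\log n)$ lower levels. This is exactly where the locality/space invariant and the tall-cache assumption are used, and where one must check that {\sc Rec-ORBA} really does stay within an $O(YZ)$-cell region so that its block set remains resident. Everything else reduces to routine unrolling of the displayed recurrences.
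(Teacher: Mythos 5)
Your proposal is correct and follows essentially the same route as the paper: it sets up the same square-root recurrences with the same base cases (AKS leaves of size $\Theta(\gamma Z)$ for work/span, subproblems fitting in cache for the caching cost) and solves them, arriving at the same three bounds and the same $Z=\Theta(\log^{1+\epsilon/2}n)$ reparametrization for the weaker cache assumption. The paper merely states the recurrences and asserts their solutions, so your explicit unrolling --- including the observation that the level-$k$ totals grow like $2^k n$ and that the in-cache base case hinges on each size-$Y$ subproblem staying within an $O(YZ)$-cell contiguous region --- is a faithful (and more detailed) rendering of the intended argument.
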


\subsection{{\sc Butterfly-Random-Permute} and {\sc Butterfly-Sort}}
\label{sec:summarysort}
From ORBA's output ({\sc Meta-ORBA} or {\sc Rec-ORBA})  we obtain $\beta$  bins where
each bin contains real and filler elements. 
To obtain a random permutation of the input array,  it is shown in~\cite{cacheoblosort,bucketosort} 
that it suffices to do the following:
Assign a $(\log n \cdot \log \log n)$-bit random label to each element, 
and sort the elements 
within each bin based their labels using 
bitonic sort, where all filler elements
are treated as having the label $\infty$ and thus moved to the end of the bin.
Finally, remove the filler elements from all bins, and output the result. 

We will sort the elements in each bin by their labels with bitonic sort  (with a cost of $O(\log\log n)$ for each comparison due to the large values
 for the labels). 
 We have $\beta$ parallel sorting problems on $\Theta(\log^2 n)$ elements, where each element has an $O(\log n \cdot \log\log n)$-bit label.
Each sorting problem can be performed  with $O(\log^4\log  n)$ span
 and $O(\log^2 n \cdot \log^3 \log n)$ work in binary fork-join using bitonic sort by paying a $O(\log\log n)$ factor for the work and span  for each comparison due
to the $O(\log n \cdot \log\log n)$-bit labels. The overall cost for this step across all subproblems
is  $O(n\cdot \log^3 \log n)$ work and the span remains $O(\log^4\log  n)$. Since each bitonic sort subproblem fits in cache, the caching 
cost is simply the scan bound $O(n/B)$.
These bounds are dominated by our bounds for {\sc Rec-ORBA}, 
so this gives an algorithm to generate a random permutation with the same bounds as {\sc Rec-ORBA}.

Finally, once the elements have been permuted in random order, 
we can use any insecure comparison-based sorting algorithm 
to sort the permuted array.
We use SPMS sort ~\cite{CR17}, the best cache-agnostic sorting algorithm for the binary fork-join model
(CREW)
to obtain {\sc Butterfly-Sort}.
Thus we achieve the bounds in Theorem~\ref{thm:sort} in the introduction,  and we have a data-oblivious sorting algorithm that 
 matches the performance bounds for SPMS~\cite{CR17}, the current best insecure
algorithm in the cache-agnostic binary fork-join model.  
Asharov et al.~\cite{bucketosort} and Chan et al.~\cite{cacheoblosort}
proved that given an ORP, one can construct an oblivious sorting algorithm
by first running ORP to randomly permute the input array, and then
applying any {\it comparison-based} non-oblivious sorting algorithm to sort
the permuted array.
It is important that the non-oblivious sorting algorithm adopted
be {\it comparison-based}, otherwise the resulting algorithm
may leak information through its access patterns (e.g., a non-comparison-based
sorting algorithm can look at some bits in each element and place
it in a corresponding bin).
The proof of Asharov et al.~\cite{bucketosort} 
and Chan et al.~\cite{cacheoblosort}
extends 
to the binary fork-join model in a straightforward manner.

\small{
\begin{table*}[t]
\caption{\label{tab:sort} {\bf Permute \& Sort.}
$\widetilde{O}$ {\rm hides a single $\log \log n$ factor. All algorithms except {\sc B-RSort}  are data-oblivious and all have negligible error, except {\sc Bitonic-Sort}, which is deterministic. Optimal bounds for general sorting (after the first two rows) are displayed in a box.}
}
{\footnotesize
\begin{tabular}{l|l|ccc}
\toprule
Algorithm & Function & work & span & cache\\
\midrule
{\sc B-RPermute}  &  $\begin{array}{l} \text{Outputs random permutation} \\ 
\text{of input array} \end{array}$ & $O(n \log n)$ & $\widetilde{O}(\log n)$ & $O(\frac{n}{B} \log_M n)$   \\
\midrule
{\sc B-RSort}  &  
$\begin{array}{l} \text{Sorts a randomly} \\ \text{permuted input array}\end{array}$  
& $ O(n \log n)$ &  $\widetilde{O}(\log n)$ & $O(\frac{n}{B} \log_M n)$  \\
\midrule
\hline
{\sc Butterfly-Sort} & $\begin{array}{l} 
\text{Sorts input array} \\
\text{using {\sc B-RPermute} and SPMS} \end{array}$ &  \fbox{$O(n \log n)$} & $\widetilde{O}(\log n)$ & \fbox{$O(\frac{n}{B} \log_M n)$}   \\
\midrule
{\sc Fork-join Bitonic-Sort} & 
$\begin{array}{l} \text{Based on deterministic} \\ \text{sorting network}\end{array}$
& $O(n \log^2 n)$ & $\widetilde{O}(\log^2 n)$ & $O(\frac{n}{B} \log_M n \cdot \log \frac{n}{M})$ \\
\midrule
\multirow{2}{*}{{\sc BB-Sort} } &           
\multirow{2}{*} 
{$\begin{array}{l} 
\text{Sorts using {\sc B-RPermute}} \\
\text{and {\sc B-RSort}} 
\end{array}$}
& Practical: $\widetilde{O}(n \log n) $  & $\widetilde{O}(\log^2 n)$ & \fbox{$O(\frac{n}{B} \log_M n)$} \\
&  & EREW: \fbox{$O( n \log n)$} &  $\widetilde{O}(\log^2 n)$ & \fbox{$O(\frac{n}{B} \log_M n)$} \\
\bottomrule
\end{tabular}
}
\end{table*}

}

\section{Practical and EREW  Sorting} 
\label{sec:practical}

So far, our algorithm relies on AKS~\cite{aks} and the SPMS~\cite{CR17} algorithm
as blackbox primitives and thus is not suitable for practical implementation. 
We now describe a potentially practical variant 
that is self-contained and implementable, and gets rid of both AKS and SPMS.
To achieve this, we make two changes:

\boi
\item {\it Bitonic Sort.} Our efficient method uses bitonic sort within it, so we first we give an efficient
 cache-agnostic, binary fork-join implementation
of bitonic sort. A na\"ive binary fork-join implementation of bitonic sort would incur
$O((n/B)\cdot \log^2 n)$ cache misses and $O(\log^3 n)$ span (achieved by
forking and joining the tasks in each layer in the bitonic network). In our
binary fork-join bitonic sort,  the work remains $O(n \cdot \log^2 n)$ but  the span
reduces to $O( \log^2 n\cdot \log\log n)$ and the caching cost
reduces to $O((n/B) \cdot \log_M n \cdot \log (n/M) )$. (See Section~\ref{sec:bitonic-main}.)

\item {\it Butterfly Random Sort.} Second, we present a simple algorithm to sort a randomly permuted input.
This algorithm 
{\sc Butterfly-Random-Sort} (or {\sc  B-RSort}) 
{\it uses the same structure as {\sc B-RPermute }}
but uses a sorted set of pivots to
determine the binning of the elements.  Outside of the need to initially
sort $O(n/\log n)$ random pivot elements, this algorithm has the same
performance bounds as {\sc B-RPermute}. (See Section~\ref{sec:recsort}.)

\eoi

We obtain a simple sorting algorithm, which we call {\sc Butterfly-Butterfly-Sort} or {\sc BB-Sort} by 
running {\sc B-RPermute} followed by {\sc B-RSort} on the input: The first call outputs a random permutation of
the input and the second call sorts this sequence since {\sc B-RSort} correctly sorts a randomly permuted input.
We highlight two implementation of {\sc BB-Sort}:

{\bf Practical BB-Sort.}
By using our improved bitonic sort algorithm in place of AKS networks for small subproblems in {\sc BB-Sort}
(both within {\sc B-RPermute} and
{\sc B-RSort}), we
obtain a simple and practical data-oblivious sorting algorithm that has
optimal cache-agnostic cache complexity if $M = \Omega (\log^{2+\epsilon} n)$
 and incurs only an $O(\log \log n)$ blow-up in work
and an $O(\log n)$ blow-up in span. This algorithm has small constant
factors, with each use of bitonic sort contributing a constant factor of 1/2 to the
bounds for the number of comparisons made.

\vone
{\bf EREW BB-Sort.} For a more  efficient EREW binary fork-join sorting algorithm, we retain the AKS network in the butterfly computations
in both {\sc B-RPermute} and {\sc B-RSort}. We continue to sort the pivots at the
start of {\sc B-RSort} with bitonic sort. Thus the cost
of the overall algorithm is dominated by the cost to sort the $n/\log n$ pivots using bitonic sort,  and we obtain a simple
self-contained EREW binary fork-join sorting algorithm with optimal work and cache-complexity and $\tilde{O}(\log^2 n)$ span.

Theorem~\ref{thm:simple-sort} states the bounds on these two variants. In the rest of this section
we briefly discuss bitonic sort in Section~\ref{sec:bitonic-main} and then we present and analyze {\sc B-RSort}
in Section~\ref{sec:recsort}.

\begin{theorem}\label{thm:simple-sort} For $M= \Omega (\log^{2+\epsilon} n)$:\\
(i) Practical BB-Sort runs in $O(n \log n  \log \log n)$ work,  cache-agnostic caching cost $Q_{sort}(n)$,
and $O(\log^2n \cdot \log\log n)$ span.\\
(ii) EREW BB-Sort runs in $O(W_{sort}(n))$ work, cache-agnostic caching cost $O(Q_{sort}(n))$, and 
 $O(\log^2n \cdot \log\log n)$ span.\\
 Both algorithms are data oblivious with negligible error.
 \end{theorem}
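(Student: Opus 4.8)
The plan is to analyze {\sc BB-Sort} as the composition of {\sc B-RPermute} followed by {\sc B-RSort}, and for each variant (Practical and EREW) to account for (a) the base-case primitive used inside the butterfly networks, (b) the cost of sorting the $O(n/\log n)$ pivots at the start of {\sc B-RSort}, and (c) the cost of the outer recursion, which in both variants inherits the structure of {\sc Rec-ORBA}. Correctness and data-obliviousness follow from the already-established facts: {\sc B-RPermute} obliviously realizes a random permutation (Lemma~\ref{orba-recursive} and the discussion in Section~\ref{sec:summarysort}), and {\sc B-RSort} correctly and obliviously sorts a randomly permuted array (to be established in Section~\ref{sec:recsort}); composing an oblivious random permutation with a comparison-based sorter yields oblivious sort, and here the ``sorter'' is itself oblivious, so the composition is oblivious with negligible error by a union bound over the two negligible failure events.

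For the performance bounds I would proceed variant by variant. \emph{Practical BB-Sort}: replace every internal AKS invocation (on $\Theta(\log n)$-bin subproblems of $O(\log^{2+\epsilon/2} n)$ elements, with the parametrization $\gamma = \log^{\epsilon/2}n$, $Z=\log^{1+\epsilon/2}n$ from Lemma~\ref{orba-recursive}) by the cache-agnostic binary fork-join bitonic sort of Section~\ref{sec:bitonic-main}. Bitonic sort on $m$ elements costs $O(m\log^2 m)$ work versus AKS's $O(m\log m)$, so each base case incurs an extra $O(\log m) = O(\log\log n)$ factor in work; since the base cases dominate the work recurrence $Q$-style accounting, the total work becomes $O(n\log n\log\log n)$. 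The span of each base case becomes $O(\log^2 m\cdot\log\log m) = O(\log^2\log n\cdot \log^3\log n)$-ish lower-order terms, and crucially the outer {\sc Rec-ORBA} recursion contributes $O(\log n\cdot\log\log n)$ span as in Lemma~\ref{orba-recursive}; the pivot sort, done by bitonic sort on $O(n/\log n)$ elements, contributes $O(\log^2 n\cdot\log\log n)$ span, which dominates — giving the claimed $O(\log^2 n\cdot\log\log n)$. For caching, since each bitonic base case fits in a cache of size $M=\Omega(\log^{2+\epsilon}n)$, the caching cost of each base case is just its scan bound, so the {\sc Rec-ORBA} caching recurrence $Q(Y)=2\sqrt Y\,Q(\sqrt Y)+O(YZ/B)$ is unchanged and solves to $O((n/B)\log_M n)=Q_{sort}(n)$; the pivot bitonic sort on $O(n/\log n)$ elements costs $O((n/(B\log n))\cdot\log_M n\cdot\log(n/M)) = O((n/B)\log_M n)$, which is absorbed. \emph{EREW BB-Sort}: keep AKS inside the butterflies (work $O(n\log n)$, caching $Q_{sort}(n)$, span $O(\log n\cdot\log\log n)$ by Lemma~\ref{orba-recursive}), and the only non-EREW, non-optimal component is the pivot sort, handled by bitonic sort: $O((n/\log n)\log^2 n) = O(n\log n)$ work (optimal), $O((n/(B\log n))\log_M n\log(n/M)) = O((n/B)\log_M n)$ caching (optimal), and $O(\log^2 n\cdot\log\log n)$ span, which again dominates the overall span. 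In both cases one checks that {\sc B-RPermute}'s own costs (Theorem~\ref{thm:sort} / Lemma~\ref{orba-recursive} plus the per-bin bitonic label-sort costing $O(n\log^3\log n)$ work, $O(\log^4\log n)$ span, $O(n/B)$ caching) are dominated.

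The main obstacle I expect is the careful bookkeeping for {\sc B-RSort} itself — in particular verifying that routing elements through the butterfly using a fixed sorted pivot set (rather than random labels) still has deterministic, input-independent access patterns and does not overflow bins except with negligible probability, so that the same {\sc Rec-ORBA}-style recursions apply verbatim. This requires showing that, conditioned on the input being a uniformly random permutation, the number of elements falling between consecutive pivots is tightly concentrated (a balls-in-bins / order-statistics argument), mirroring the overflow analysis for {\sc Meta-ORBA} but now with the randomness coming from the input permutation rather than fresh labels. Once {\sc B-RSort}'s bounds are nailed down to match {\sc B-RPermute}'s (modulo the one-time pivot sort), assembling Theorem~\ref{thm:simple-sort} is just combining the three cost components and taking the dominating term; the only subtlety left is tracking the small constant factors (the stated $1/2$ per bitonic-sort comparison count), which I would record but not belabor.
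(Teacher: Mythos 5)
Your proposal is correct and follows essentially the same route as the paper: decompose {\sc BB-Sort} into {\sc B-RPermute} plus {\sc B-RSort}, observe that the bitonic pivot sort on $O(n/\log n)$ samples dominates the span at $O(\log^2 n \cdot \log\log n)$ while its work and caching costs are absorbed, charge the extra $O(\log\log n)$ work factor in the practical variant to replacing AKS with bitonic sort in the polylogarithmic base cases, and reduce obliviousness to the permute-then-comparison-sort composition plus a union bound over the negligible overflow events. The overflow concentration argument you flag as the remaining obstacle is exactly the {\sc Meta-SBA} binomial/Chernoff analysis the paper carries out, so nothing is missing.
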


Table~\ref{tab:sort} lists the bounds for our sorting and permuting algorithms.

\subsection{Binary Fork-Join Bitonic Sort}\label{sec:bitonic-main}

Bitonic sort has $\log n$ stages of {\sc Bitonic-Merge}.
We observe that {\sc Bitonic-Merge} is a butterfly network and hence
it has a binary fork-join algorithm with $O(n\log n)$ work,
$O(Q_{sort}(n))$ caching cost, and $O(\log n \log\log n)$ span.
Using this within the bitonic sort algorithm
we obtain the
following theorem. (Details are in Appendix~\ref{sec:bitonic-merge}.)

\begin{theorem}
There is a data oblivious cache-agnostic binary fork-join implementation of bitonic sort
that can sort ${n}$ elements in $O({n} \log^2 {n})$
work, $O(\log^2 {n} \cdot \log \log {n})$ span, and $O(({n}/B) \cdot \log_M {n} \cdot \log ({n}/M))$ cache complexity,  when $n > M \geq B^2$.
\label{thm:bitonic}
\end{theorem}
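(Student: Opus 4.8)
The plan is to analyze bitonic sort stage by stage, exploiting the observation (already stated) that each {\sc Bitonic-Merge} on $k$ elements is structurally a butterfly network of depth $\log k$. Bitonic sort consists of $\log n$ successive rounds; in round $j$ (for $j = 1, \ldots, \log n$) the array is partitioned into $n/2^j$ contiguous blocks of size $2^j$, and a {\sc Bitonic-Merge} is run independently on each block. So the whole analysis reduces to (a) giving an efficient binary fork-join implementation of a single {\sc Bitonic-Merge} on a butterfly network, and (b) summing the costs over the $\log n$ rounds with the block sizes geometrically increasing up to $n$.

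First I would establish the single-{\sc Bitonic-Merge} bounds. A {\sc Bitonic-Merge} on $k$ elements is a butterfly with $\log k$ layers, each layer performing $k/2$ compare-exchange operations. Work is $O(k \log k)$. For the span, instead of the na\"ive approach of forking/joining $k$ threads at every layer (which costs $\log k$ per layer, hence $O(\log^2 k)$ total), I would use the standard recursive/divide-and-conquer butterfly emulation in the binary fork-join model: split the butterfly into a top half and a bottom half of $\sqrt{k}$ "super-nodes," recurse, and use a transpose to reroute; this is exactly the structure used for {\sc Rec-ORBA} in Section~\ref{sec:summaryrecorba}, and it yields span $T(k) = 2T(\sqrt k) + O(\log k)$, which solves to $O(\log k \cdot \log\log k)$, and caching cost satisfying a recurrence of the form $Q(k) = 2\sqrt k \cdot Q(\sqrt k) + O(k/B)$ with base case when the subproblem fits in cache, giving $Q(k) = O((k/B)\log_M k)$ under the tall-cache assumption $M \ge B^2$. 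Data obliviousness is immediate since the butterfly's access pattern is fixed and independent of the data; correctness is the classical correctness of bitonic merge.

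Then I would aggregate over the $\log n$ rounds. In round $j$ there are $n/2^j$ independent {\sc Bitonic-Merge} instances of size $2^j$, run in parallel (binary-tree forking over the $n/2^j$ instances adds only $O(\log n)$ to the span of that round). Work per round is $n/2^j \cdot O(2^j \cdot j) = O(nj)$, so total work is $\sum_{j=1}^{\log n} O(nj) = O(n\log^2 n)$. Span per round is $O(\log(2^j)\cdot\log\log(2^j)) + O(\log n) = O(\log n \cdot \log\log n)$, so total span over $\log n$ rounds is $O(\log^2 n \cdot \log\log n)$. For the caching cost, each round contributes $\sum_{\text{instances}} Q(2^j) = (n/2^j)\cdot O((2^j/B)\log_M 2^j) = O((n/B)\log_M 2^j)$; summing over $j$ from $1$ to $\log n$ gives $O((n/B)\log_M n \cdot \log n)$, which I would then tighten to $O((n/B)\log_M n \cdot \log(n/M))$ by observing that for the $\log M$ smallest rounds the entire instance fits in cache and the per-round caching cost is just the scan bound $O(n/B)$ (contributing only $O((n/B)\log M)$ in total, absorbed), so only the $O(\log(n/M))$ large rounds incur the $\log_M$-weighted term. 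Data obliviousness of the whole algorithm follows since it is a fixed composition of oblivious butterfly networks, and the security simulator just replays this fixed access pattern given $|{\bf I}| = n$.

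The main obstacle I expect is the caching analysis — specifically getting the $\log(n/M)$ rather than $\log n$ factor, which requires careful bookkeeping of which rounds fit in cache together with the cache-agnostic transpose subroutine's cost under the tall-cache assumption; the $\sqrt{}$-recursion for the butterfly must be shown to bottom out correctly at the cache-size threshold without the algorithm knowing $M$ or $B$. The work and span recurrences are routine. I would also need to verify the boundary condition $n > M \ge B^2$ is exactly what makes the base cases of both recurrences clean.
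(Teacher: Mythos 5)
Your treatment of a single {\sc Bitonic-Merge} is exactly the paper's: the $\sqrt{k}$-way recursive butterfly with two transposes gives the recurrences $W_{\rm bmerge}(k)=2\sqrt{k}\,W_{\rm bmerge}(\sqrt{k})+O(k)$, $T_{\rm bmerge}(k)=2T_{\rm bmerge}(\sqrt{k})+O(\log k)$, and $Q_{\rm bmerge}(k)=2\sqrt{k}\,Q_{\rm bmerge}(\sqrt{k})+O(k/B)$ with base case $k/B$ once $k\le M$, solving to $O(k\log k)$, $O(\log k\log\log k)$, and $O((k/B)\log_M k)$. The work and span aggregation over the outer structure is also fine, and obliviousness is immediate from the fixed topology. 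The gap is in the cache accounting for the outer loop.

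You decompose {\sc Bitonic-Sort} into $\log n$ \emph{global rounds}, completing round $j$ on the entire array before starting round $j+1$. Since $n>M$, each such global pass over the array costs $\Theta(n/B)$ misses regardless of instance size, so the $\Theta(\log M)$ ``small'' rounds contribute $\Theta((n/B)\log M)$ in total. Your claim that this is absorbed into $O((n/B)\log_M n\cdot\log(n/M))$ fails when $M<n<M^{2}$: e.g.\ for $n=2M$ the target bound is $O(n/B)$ while your small rounds alone cost $\Theta((n/B)\log M)$. The fix is to make the \emph{outer} structure recursive as well (as the paper does): {\sc Bitonic-Sort}$(A)$ forks two recursive sorts on the halves of $A$ and then calls {\sc Bitonic-Merge}$(A)$, giving $Q(n)=2Q(n/2)+Q_{\rm bmerge}(n)$ with base case $Q(m)=O(m/B)$ for $m<M$. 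In this depth-first order, once a sort subproblem fits in cache \emph{all} of its remaining merge rounds execute while it is cache-resident, so the entire small-round regime costs one scan $O(n/B)$ rather than $\log M$ scans; only the $\log(n/M)$ levels above the cache threshold pay $O((n/B)\log_M n)$ each, yielding the stated $O((n/B)\log_M n\cdot\log(n/M))$. The work and span recurrences are unaffected by this reordering (they unroll to exactly your round-by-round sums), so only the cache argument needs to change.
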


\subsection{Sorting a Random Permutation: {\sc Butterfly-Random-Sort (B-RSort)}}\label{sec:recsort}

As mentioned, 
{\sc Butterfly-Random-Sort (B-RSort)}
uses the same network structure as {\sc Rec-ORBA}.
During pre-processing, we pick and sort a random set of pivot elements.
Initially, elements in the input array ${\bf I}$ are partitioned
in bins containing $\Theta(\log^3 n)$ elements each.  
Then, the elements traverse
a $\gamma$-way butterfly network where $\gamma = \Theta(\log n)$ and is
chosen to be a power of $2$: at each step, we use  sorting
to distribute a collection of $\gamma$ bins at the previous level
into $\gamma$ bins at the next level.
Instead of determining the output bin at each level by the random label assigned to an element as in 
{\sc Rec-ORBA}, 
here each bin 
has a range determined by two pivots, and each element will be
placed in the bin whose range contains the element's value.
Also,  in {\sc Rec-ORBA} we needed to use filler elements
to hide the actual load of the intermediate bins.
{\sc B-RSort} does not need to be data-oblivious 
since its input is a random permutation of the original input, hence
we  {\it do not need filler elements} in {\sc B-RSort}.

\paragraph{Pivot selection.}
The pivots are chosen in a pre-processing phase, 
so that we can guarantee that every bin has $O(\log^3 n)$ elements with all but negligible
failure probability, as follows.
\hide{
we show below.
Roughly speaking, the pivots are chosen to be approximate $\Theta(n/\log^3 n)$-quantiles,
which divide the input elements into $\Theta(n/\log^3 n)$ subsets of approximately equal size.
To select the pivots, we do the following:
}
\begin{enumerate}
\item 
First, generate a sample $\Pi$ of size close to $n/\log n$ from
the input array ${\bf I}$ 
by sampling each element with probability $1/\log n$. We then sort $\Pi$ using 
bitonic sort.
\item
In the sorted version of $\Pi$, every element whose index
is a multiple of $\log^2 n$ 
is moved into a new sorted array ${\sf pivots}$. 
Pad the ${\sf pivots}$ array with an appropriate number of $\infty$ pivots
such that its length plus 1 would be a power of $2$.
\end{enumerate}

Using a Chernoff bound it is readily seen that the size of $\Pi$ is $(n/\log n)~ \pm~ n^{3/4}$ except with negligible probability.
We choose every $(\log^2 n)$-th element after sorting $\Pi$ to form our set of pivots. Hence  $r$,  the number of pivots,
is $\frac{n}{\log^3n} + o(n/\log^3 n)$ except with negligible probability.

By Theorem~\ref{thm:bitonic}, sorting $\Theta(n/\log n)$ 
samples incurs $O({n} \log {n})$
total work, $O(({n}/B) \log_M {n})$ cache complexity, and span
$O(\log^2 {n}\log \log {n})$  --- this step will dominate the span
of our overall algorithm.
The second step of the above algorithm can be performed using
prefix-sum, incurring total work $O(n)$, cache complexity
$O(n/B)$, and span $O(\log n)$.

\paragraph{Sort into bins.}
Suppose that $r - 1$ pivots were selected above; the pivots
define a way to partition the values being sorted
into $r$ roughly evenly loaded {\it regions}, 
where the $i$-th region includes
all values in the range $({\sf pivots}[i-1], {\sf pivots}[i]]$  for $i \in [r]$.
For convenience, 
we assume ${\sf pivots}[0] = -\infty$ and 
${\sf pivots}[r] = \infty$. We first describe algorithm
 {\sc Rec-SBA} (Recursive Sort Bin Assignment), which partially
sorts the randomly permuted input into a sequence of bins, and then we sort
within each bin to obtain the final sorted output. This is similar to applying
{\sc Rec-ORBA} followed by sorting within bins in {\sc B-RPermute}.

In algorithm {\sc Rec-SBA} the input array ${\bf I}$ will be viewed
as $r = \Theta(n/\log^3 n)$ bins   
each containing $n/r = \Theta(\log^3 n)$ elements.
The {\sc Rec-SBA} algorithm first partitions the input bins into 
$\sqrt{r}$ groups each containing $\sqrt{r}$ consecutive bins. Then,
it recursively   
sorts each group using appropriate pivots among the precomputed pivots.
At this point, it applies a matrix transposition on the resulting 
bins. 
After the matrix transposition all elements in each group of $\sqrt{r}$ consecutive bins
will have values in a range between a pair of pivots $\sqrt r$ apart in the sorted list of pivots,
and this group is in its final sorted position relative to the other groups.
Now, for a second time, the algorithm recursively sorts
each group of consecutive $\sqrt{r}$ bins, using the appropriate pivots 
and this will place each element in its final bin in sorted order. 
Our algorithm also guarantees
that the pivots are accessed in a cache-efficient manner.

We give a more formal description in $\text{\sc Rec-SBA}^\gamma({\bf I}, {\sf pivots})$.
Recall that $\gamma = \Theta(\log n)$ is the branching factor in the butterfly network.

\paragraph{Final touch.}
To output the fully sorted sequence, 
{\sc BB-Sort} simply applies bitonic sort to each bin in the output of {\sc Rec-SBA} and outputs the concatenated outcome.
The cost of this step is asymptotically absorbed 
by {\sc Rec-SBA} in all metrics.

\paragraph{Overflow analysis.}  
It remains to show that no bin will receive more than  
$C \log^3 n$ elements for some suitable constant $C > 1$ except with negligible probability.




Let us use the term {\sc Meta-SBA} to denote the non-recursive meta-algorithm for 
{\sc Rec-SBA}.
For overflow analysis, we can equivalently analyze {\sc Meta-SBA}.
 Consider a collection of $\gamma = \Theta(\log n)$ 
bins in the $j$-th subproblem in level $i-1$ of {\sc Meta-SBA} 
 whose contents are input to a bitonic sorter. Let us refer to this as group $(i-1,j)$. These elements will be 
 distributed into $\gamma$ bins in level $i$ using the $\gamma - 1$ 
pivots with label pair $(i-1,j)$.
 The elements in these $\gamma$ bins in group $(i-1,j)$ came from $\gamma^{i-1}$  
 bins  in the first level,
 each containing exactly $\log^2 n$ elements from input array ${\bf I}$. 
Let $U$ be the set of elements in
 these $\gamma^{i-1}$ bins in the first level, so size of $U$ is 
$u= \gamma^{i-1} \cdot n/r = \gamma^{i-1} \cdot \Theta(\log^3 n)$.
 
 Let us fix our attention on a bin $b$ in the $i$-th level
 into which some of the elements in group $(i-1, j)$ are distributed after the bitonic sort. 
 Consider the pair of pivots $p, q$ that are 
 used to determine the contents of bin $b$ (we allow $p = -\infty$ and $q = \infty$ to account
for the first and last segment).
This pair  of pivots is used to split the elements in the level $i-1$ group $(i-1,j)$ hence
 the pivots $p$ and $q$ are $r/\gamma^{i-1}$ apart in the sorted sequence  ${\sf pivots}[1..r]$.
We also know that the number of elements in ${\bf I}$ with ranks between any two successive pivots in ${\sf pivots}[1..r]$
is at most $\log^3 n +o(\log^3 n)$ except with negligible probability. 
Hence the number of elements in the input array ${\bf I}$ that have ranks 
 between the ranks of these two pivots is $k= (r/\gamma^{i-1}) \cdot \log^3 n$.
 Recalling that $r\leq 2n/\log^3 n$ 
(except with negligible probability),
we have  $k= (r/\gamma^{i-1}) \cdot \log^3 n \leq  2n/(\gamma^{i-1})$  
except with negligible probability.

 Let $Y_b$ be the number of elements in bin $b$. These are the elements in $U$ that have rank between the ranks of $p$ and $q$. The random variable
 $Y_b$ is binomially distributed on $u = |U|$ elements with success probability 
 at most 
$k/n = 2/\gamma^{i-1}$. 
Hence, $\E[Y_b] \leq u \cdot k/n = \Theta(\log^3 n)$ and using Chernoff bounds,
 $Y_b < \Theta(\log^3 n) + o(\log^3 n)$ 
with all but negligible probability.

\vspace{.1in}
\begin{mdframed}
\begin{center}
$\text{\sc Rec-SBA}^\gamma({\bf I}, {\sf pivots})$
\end{center}
\noindent {\bf Input:}
The input array ${\bf I}$
contains $\bins$ bins where $\bins$ is assumed to be a power of $2$.
The array ${\sf pivots}$ contains $\bins - 1$ number of pivots
that define $\bins$ number of regions, where the $i$-th region 
is $({\sf pivots}[i-1], {\sf pivots}[i]]$.
Each element in ${\bf I}$ will go to an output bin  
depending on which of the $\bins$ regions its value falls into.

\vspace{5pt}
\noindent{\bf Algorithm:}
If $\bins \leq \gamma$, 
use bitonic sort to assign the input array ${\bf I}$ to a total of $\gamma$ bins
based on ${\sf pivots}$. Return the resulting list of $\bins$ bins.

Else, proceed with the following recursive algorithm. 
\begin{enumerate}[leftmargin=5mm,itemsep=1pt,topsep=1pt]
\item 
Divide the input array ${\bf I}$ into 
$\sqrt{\bins}$ partitions where each partition contains exactly $\sqrt{\bins}$ 
consecutive bins\footnote{For simplicity, we assume
that $\bins$ is a perfect square in every recursive call.}
Henceforth let ${\bf I}^j$ denote the $j$-th partition.

\item 
\label{step:pivot1}
Let ${\sf pivots}'$ be constructed by taking every pivot in ${\sf pivots}$ whose
index is a multiple of $\sqrt{\bins}$.

In parallel: For each $j \in [\sqrt{\bins}]$, let ${\sf Bin}_{1}^j, \ldots, {\sf Bin}_{\sqrt{\bins}}^j 
\leftarrow \text{\sc Rec-SBA}^{\gamma}(I^j, {\sf pivots}')$.
\item 
\label{step:trans}
Let ${\sf Bins}$ be a $\sqrt{\bins} \times \sqrt{\bins}$ matrix
where the $j$-th row is the list of bins ${\sf Bin}_{1}^j, \ldots, {\sf Bin}_{\sqrt{\bins}}^j$.
Note that each element in the matrix is a bin.
Now, perform a matrix transposition: \\
${\sf TBins} \leftarrow {\sf Bins}^T$.\\
Henceforth, 
let ${\sf TBins}[i]$ denote the $i$-th row of ${\sf TBins}$ consisting of $\sqrt{\bins}$ bins, 
\item 
\label{step:pivot2}
In parallel: For $i \in [\sqrt{\bins}]$:

\quad let $\widetilde{\sf Bin}_{1}^i, \ldots, \widetilde{\sf Bin}_{\sqrt{\bins}}^i  
\leftarrow \text{\sc Rec-SBA}^{\gamma}({\sf TBins}[i], 
{\sf pivots}[(i-1) \sqrt{\bins} + 1 .. i \cdot \sqrt{\bins} - 1])$
\item 
Return the concatenation 
$\widetilde{\sf Bin}_{1}^1, \ldots, \widetilde{\sf Bin}_{\sqrt{\bins}}^1$, 
$\widetilde{\sf Bin}_{1}^2, \ldots, \widetilde{\sf Bin}_{\sqrt{\bins}}^2$,  
$\ldots$, 
$\ldots$, 
$\widetilde{\sf Bin}_{1}^{\sqrt{\bins}}$, $\ldots$, $\widetilde{\sf Bin}_{\sqrt{\bins}}^{\sqrt{\bins}}$.
\end{enumerate}
\end{mdframed}

\vspace{.1in}

\hide{
\paragraph{Final touch.}
To output the fully sorted sequence, 
{\sc BB-Sort} simply applies bitonic sort to each bin in the output of {\sc Rec-SBA} and outputs the concatenated outcome.
The cost of this step is asymptotically absorbed 
by {\sc Rec-SBA} in all metrics.

\paragraph{Overflow analysis.}  
It remains to show that no bin will receive more than  
$C \log^3 n$ elements for some suitable constant $C > 1$ except with negligible probability.




Let us use the term {\sc Meta-SBA} to denote the non-recursive meta-algorithm for 
{\sc Rec-SBA}.
For overflow analysis, we can equivalently analyze {\sc Meta-SBA}.
 Consider a collection of $\gamma = \Theta(\log n)$ 
bins in the $j$-th subproblem in level $i-1$ of {\sc Meta-SBA} 
 whose contents are input to a bitonic sorter. Let us refer to this as group $(i-1,j)$. These elements will be 
 distributed into $\gamma$ bins in level $i$ using the $\gamma - 1$ 
pivots with label pair $(i-1,j)$.
 The elements in these $\gamma$ bins in group $(i-1,j)$ came from $\gamma^{i-1}$  
 bins  in the first level,
 each containing exactly $\log^2 n$ elements from input array ${\bf I}$. 
Let $U$ be the set of elements in
 these $\gamma^{i-1}$ bins in the first level, so size of $U$ is 
$u= \gamma^{i-1} \cdot n/r = \gamma^{i-1} \cdot \Theta(\log^3 n)$.
 
 Let us fix our attention on a bin $b$ in the $i$-th level
 into which some of the elements in group $(i-1, j)$ are distributed after the bitonic sort. 
 Consider the pair of pivots $p, q$ that are 
 used to determine the contents of bin $b$ (we allow $p = -\infty$ and $q = \infty$ to account
for the first and last segment).
This pair  of pivots is used to split the elements in the level $i-1$ group $(i-1,j)$ hence
 the pivots $p$ and $q$ are $r/\gamma^{i-1}$ apart in the sorted sequence  ${\sf pivots}[1..r]$.
We also know that the number of elements in ${\bf I}$ with ranks between any two successive pivots in ${\sf pivots}[1..r]$
is at most $\log^3 n +o(\log^3 n)$ except with negligible probability. 
Hence the number of elements in the input array ${\bf I}$ that have ranks 
 between the ranks of these two pivots is $k= (r/\gamma^{i-1}) \cdot \log^3 n$.
 Recalling that $r\leq 2n/\log^3 n$ 
(except with negligible probability),
we have  $k= (r/\gamma^{i-1}) \cdot \log^3 n \leq  2n/(\gamma^{i-1})$  
except with negligible probability.

 Let $Y_b$ be the number of elements in bin $b$. These are the elements in $U$ that have rank between the ranks of $p$ and $q$. The random variable
 $Y_b$ is binomially distributed on $u = |U|$ elements with success probability 
 at most 
$k/n = 2/\gamma^{i-1}$. 
Hence, $\E[Y_b] \leq u \cdot k/n = \Theta(\log^3 n)$ and using Chernoff bounds,
 $Y_b < \Theta(\log^3 n) + o(\log^3 n)$ 
with all but negligible probability. 
}

 \paragraph{Performance analysis.}  
By the above overflow analysis we can assume that if any bin in  {\sc Rec-SBA} 
receives more than $C \log^3 n$ elements
for some suitable constant $C > 1$, the algorithm fails, since we have shown above that this
happens only with negligibly small in $n$ probability.

For the practical version, we replace 
 the AKS
with bitonic on problems of size $\Theta(\log^3 n)$, 
the total work becomes $O(n \log n \log \log n)$, and 
the span becomes 
$O(\log n \log \log n\log \log \log n)$. 

To see the cache complexity,  
observe that Line~\ref{step:pivot1} requires a sequential scan through the array ${\sf pivots}$
that is passed to the current recursive 
call, writing down $\bins - 1$ number of pivots, 
and making $\sqrt{\bins}$ copies of them to pass one to each of the $\sqrt{\bins}$ subproblems.
Now, Line~\ref{step:pivot2} divides ${\sf pivots}$
 into $\sqrt{\bins}$ equally sized partitions, removes the rightmost boundary 
point of each partition, 
passes each partition (now containing $\sqrt{\bins} -1$ pivots) to one subproblem.
Both Lines~\ref{step:pivot1} and \ref{step:pivot2} are upper bounded by the 
scan bound $Q_{\rm scan}(\beta \cdot \Theta(\log^3 n))$, and so is the matrix
transposition in Line~\ref{step:trans}. 
Therefore,  {\sc Rec-SBA} 
achieves optimal cache complexity similar to  {\sc Rec-ORBA}, but now 
assuming that $M = \Omega(\log^4 n)$. 


\paragraph{Our practical variant.}
Putting everything all together,
in our practical variant, 
we use {\sc RB-Permute} (instantiated with bitonic sort for each poly-logarithmically
sized problem), and similarly in {\sc Rec-SBA} and in the final sorting of the bins.
The entire algorithm --- outside of sorting the $O(n/\log n)$ pivots ---
incurs $O(n \log n \log \log n)$ total work, 
$O(\log n \log \log n\log \log \log n)$ span, and optimal cache complexity
$O((n/B) \log_M n)$.  When we add the cost of sorting the pivots, the
work and caching bounds remain the same but the span becomes $O(\log^ 2n \cdot \log\log n)$.
The constants hidden in the big-O are small:
for total work, the constant is 
roughly $18$; for cache complexity, the constant is roughly $6$;
and for span, the constant is roughly $1$.

\paragraph{Improving the requirement on $M$.}
We can improve the constraint on $M$
to $M= \Omega(\log^{2 + \epsilon} n)$
for an arbitrarily small $\epsilon \in (0, 1)$, if we make the following small modifications:
\begin{enumerate}[topsep=3pt,itemsep=1pt,leftmargin=5mm]
\item 
Choose every $(\log n)^{1+\epsilon}$-th element from the samples
as a pivot; and let $\gamma = \Theta(\log n)$.
Note that in this case, the total number of pivots is 
$r = \Theta(n/\log^{2 + \epsilon} n)$.

\item 
Earlier, we divided the input into 
$r$ bins each with $n/r$ elements; but now, we divide
the input array into $r \cdot \gamma$ bins, each filled with $n /(r \cdot \gamma) = 
\Theta(\log^{1 + \epsilon} n)$ elements.
Because there are $\gamma$ times more bins than pivots now, 
in the last level of the 
meta-algorithm, we will no longer have any pivots to consume -- but this does not matter
since we can simply sort each group of $\gamma$ bins in the last level.
\end{enumerate}

Finally, for the EREW version, we retain the AKS networks for the $polylog n$-size subproblems, so
the only step that causes an increase in performance bounds is the use of bitonic sort to sort $n/\log n$
pivots. Using our EREW binary fork-join algorithm for this step, we achieve the bounds stated in
Theorem~\ref{thm:simple-sort}.

\section{Oblivious, Binary Fork-Join Simulation of PRAMs}\label{sec:pram-sim}
We show that our new oblivious sorting algorithm
gives rise to oblivious simulation 
of CRCW PRAMs in the binary fork-join model with non-trivial efficiency guarantees.
In Section~\ref{sec:pram-app}  we give an oblivious simulation
of PRAM in the binary fork-join model 
that is only efficient if the space $s$ consumed by the original CRCW PRAM is small (e.g.,
roughly comparable to the number of cores $p$).
Then in Section~\ref{sec:opram}, we present 
another simulation that yields better efficiency when 
the original PRAM may consume large space.

\subsection{Oblivious, Binary Fork-Join Simulation of Space-Bounded PRAMs}\label{sec:pram-app}

We give an oblivious simulation
of PRAM in the  binary fork-join model 
that is only efficient if the space $s$ consumed by the original CRCW PRAM is small (e.g.,
roughly comparable to the number of cores $p$).
Our oblivious  simulation is based on a sequential cache-efficient  emulation of a PRAM step  given in~\cite{CG+95}, which in turn is based on a well-known
emulation of a $p$-processor CRCW PRAM on an EREW PRAM in $O(\log p)$ parallel time with $p$ processors (see, e.g.,~\cite{KR90}).
To ensure data-obliviousness, we will make use of the following well-known oblivious building block, 
{\it send-receive}:\footnote{The send-receive abstraction is often referred
to as oblivious routing in the data-oblivious 
algorithms literature~\cite{opram,circuitopram,opramdepth}. We avoid the name
``routing'' because of its other connotations in the algorithms literature.}. 

{\it send-receive}:
The input has a source array and a destination array.
The source array represents $n$ senders, each of whom holding
a key and a value; it is promised that all keys are distinct. 
The destination array  
represents $n'$ receivers each holding a key. 
Each receiver needs to
learn the value corresponding to the key it is requesting 
from one of the sources. If the key is not found, the receiver should receive $\bot$.
(Note that although each receiver wants only one value, a sender
can send its values to multiple receivers.) 

In Appendix~\ref{sec:oblivbldgblock}, we will show
how to accomplish 
oblivious send-receive within the sorting bound in the binary fork-join model. 
Using this primitive, the
PRAM simulation on binary fork-join works as follows.
We separate each PRAM step into a read step, followed by a local computation step (for which no simulation is needed), followed by a write step.
Suppose that in some step of the original CRCW PRAM, each of the $p$ 
processors has a request of the form $({\sf read}, {\addr}_i, i)$
or $({\sf write}, \addr_i, v_i, i)$ where $i \in [p]$, 
indicates that either it wants to read
from logical address ${\addr}_i$ or it wants to write $v_i$ to ${\addr}_i$.
\begin{enumerate}
\item 
For a read step, using oblivious send-receive, every request obtains a fetched value
from the memory array.
\item  
In a write step we need to perform the writes to the memory array.
To do this, we first perform 
a conflict resolution step in which we suppress the duplicate writes
to the same address in the incoming batch of $p$ requests. Moreover, if multiple
processors want to write to the same address, the  
one that is defined by the original CRCW PRAM's priority rule is preserved;
and all other writes to the same address are replaced with fillers.
This can be accomplished with $O(1)$ oblivious sorts.
Now, with oblivious send-receive, every address in the memory array can be updated with its new value. 
\end{enumerate}

Thus we have the following theorem.

\begin{theorem}[Space-bounded PRAM-on-OBFJ]\label{thm:pram-emul}
Let $M > \log^{1+\epsilon } s$ and $s\geq M \geq B^2$.
Any $p$-processor CRCW PRAM algorithm
that uses space at most $s$ can
be converted to a functionally equivalent, oblivious and cache-agnostic 
algorithm in the binary fork-join model, where each parallel step in the CRCW
can be simulated
with $O(W_{\rm sort}(p+s))$ work, $O(Q_{\rm sort}(p +s))$ cache complexity,
and $O(T_{\rm sort}(p+s))$ span.
\end{theorem}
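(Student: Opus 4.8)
\emph{Proof plan.} The plan is to realize the read / local-compute / write decomposition sketched above so that one parallel PRAM step is simulated by a \emph{fixed} schedule --- depending only on $p$ and $s$, not on the data --- consisting of $O(1)$ oblivious sorts (Theorem~\ref{thm:sort}), $O(1)$ oblivious send-receives (whose cache-agnostic binary fork-join implementation within the sorting bound is given in Appendix~\ref{sec:oblivbldgblock}), and $O(1)$ prefix-sum / scan passes, all on arrays of size $\Theta(p+s)$. Granting this, the cost claim is immediate: the phases of a step run sequentially, the scans are asymptotically dominated by the sorts, and summing a constant number of copies of $W_{\rm sort}(p+s)$, $Q_{\rm sort}(p+s)$, and $T_{\rm sort}(p+s)$ stays within those bounds. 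The hypotheses $M \ge B^2$ and $M = \Omega(\log^{1+\epsilon} s)$ are exactly what Theorem~\ref{thm:sort} and the send-receive primitive require on inputs of size $p+s$, so all the primitives are cache-agnostic as needed.

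\emph{The simulation.} I would maintain the PRAM memory as an explicit array ${\sf Mem}$ of $s$ cells indexed by logical address, kept persistent across steps. To simulate one step, gather the $p$ processor requests, each of the form $({\sf read}, {\sf addr}_i, i)$ or $({\sf write}, {\sf addr}_i, v_i, i)$. For the read phase, run one oblivious send-receive with the $s$ memory cells as senders (key $=$ address, value $=$ contents) and the read requests as receivers (key $= {\sf addr}_i$); each processor thereby learns its fetched value and then performs its local computation entirely in registers, so that phase needs no obliviousness. For the write phase, first resolve conflicts: obliviously sort the $p$ write requests by $({\sf addr}_i, i)$, then with one linear scan keep, within each maximal block of equal addresses, only the request dictated by the CRCW priority rule and replace the rest by fillers (for non-priority CRCW variants, first reduce to priority in the standard way); a further $O(1)$ oblivious sorts restore a canonical order. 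Finally run one oblivious send-receive with the surviving $p$ writes as senders and the $s$ cells of ${\sf Mem}$ as receivers: a cell whose address is not written receives $\bot$ and keeps its old value, every other cell is overwritten, and ${\sf Mem}$'s indexing is preserved so it is ready for the next step. This is the classical $O(\log p)$-time CRCW-on-EREW emulation together with its sequential cache-efficient realization of~\cite{CG+95}, re-expressed over oblivious primitives, so functional equivalence is clear.

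\emph{Obliviousness.} For obliviousness I would argue by composition. Each primitive --- {\sc Butterfly-Sort}, oblivious send-receive, and prefix sum --- obliviously realizes its ideal functionality with a simulator that takes only the input sizes. Since the control flow of a simulated step is a fixed sequence of such primitives determined solely by $(p,s)$, a standard hybrid argument composes the component simulators into a simulator for a single step that needs only $|{\bf I}|$; composing again across the (polynomially many) steps yields a simulator for the whole computation, and since each component contributes only negligible statistical distance, the total remains negligible.

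\emph{Main obstacle.} The one genuinely non-routine ingredient is the claim deferred to Appendix~\ref{sec:oblivbldgblock}: that oblivious send-receive between a source array and a destination array of (possibly very different) sizes summing to $N$ runs in $O(W_{\rm sort}(N))$ work, $O(Q_{\rm sort}(N))$ cache complexity, and $O(T_{\rm sort}(N))$ span in the binary fork-join model and is cache-agnostic --- in particular that the asymmetry between $s$ and $p$ causes no blow-up. Everything else reduces to the bookkeeping indicated above.
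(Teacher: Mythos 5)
Your proposal matches the paper's own proof essentially step for step: the same read/local-compute/write decomposition, the read phase realized by one oblivious send-receive from the memory array to the requests, and the write phase realized by $O(1)$ oblivious sorts for priority-rule conflict resolution followed by a send-receive back into the memory array, with all costs charged to a constant number of sorting-bound primitives on arrays of size $O(p+s)$. The extra detail you supply (the explicit simulator-composition argument and the identification of senders and receivers in each phase) is consistent with, and slightly more explicit than, what the paper writes.
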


Using the above theorem, any fast and efficient PRAM algorithm that uses linear space (and many
of them do) will give rise to an oblivious algorithm in the  
cache-agnostic binary fork-join model with good performance.
In Section~\ref{sec:app-main} we will use this theorem
to obtain new results for graph problems such as tree contraction,
graph connectivity, and minimum spanning forest ---
for all three problems, the performance bounds of our new data-oblivious 
algorithms improve on the previous best results 
even for algorithms that need not be data oblivious.

\subsection{Oblivious, Binary Fork-Join Simulation of Large-Space PRAMs} 
\label{sec:opram}
In this section, we describe a simulation strategy
that achieves better bounds when the PRAM's space can be large.
This is obtained by 
combining our 
Butterfly Sort 
with the prior work of 
Chan et al.~\cite{opramdepth}.
We assume familiarity with~~\cite{opramdepth} here;
Appendix~\ref{sec:chan-opram} 
has an
overview of this prior result.
\hide{
Specifically, we show that any $p$-processor CRCW PRAM 
consuming at most $s$ space can 
be converted to a functionally equivalent, data-oblivious and binary fork-join
program, where each parallel step in the CRCW
can be simulated 
with $O(p \log^2 s)$ total work, $O(\log s \cdot  \log \log s)$ span, 
and $O((p/B) \cdot \log_M p \log s + p \cdot \log s \cdot \log_B s)$ 
cache complexity.

Our result can be viewed as a generalization of the prior best 
Oblivious Parallel RAM (OPRAM) 
result~\cite{circuitopram,opramdepth}.
The prior result~\cite{circuitopram,opramdepth}
shows that any $p$-processor CRCW
PRAM can be simulated with an oblivious CREW PRAM 
where each CRCW PRAM step is simulated 
with $p \log^2 s$ total work and $O(\log s \log \log s)$ parallel runtime (without
the binary-forking requirement). 
Essentially our result matches the best known 
result, and we further show that the extra binary-forking requirement 
does not incur additional overhead during this simulation.
The prior work~\cite{circuitopram,opramdepth} also did not explicitly consider cache complexity.
}

\hide{
\paragraph{Background on Chan, Chung, and Shi~\cite{opramdepth}.}
In Chan, Chung, and Shi~\cite{opramdepth}'s algorithm, there are 
$O(\log s)$ recursion depths, where each  
recursion depth stores metadata called position labels for the next one. 
For each recursion depth $i$, there is a binary tree containing $2^i$ leaves (called
the {\it ORAM trees}~\cite{asiacrypt11,circuitopram}).
The top $\log_2 p$ levels of each 
tree is grouped together into a {\it pool} of size $O(p)$, in this way, the binary tree
actually becomes $O(p)$ disjoint subtrees.
Since the elements and their position labels are stored in random subtrees 
along random paths, except with negligible probability,   
only slightly more than logarithmically many requests will hit the same subtree.

Chan et al.~\cite{opramdepth}'s scheme
relies on a few additional oblivious 
primitives called oblivious ``send-receive'', ``propagation'', and ``aggregation''. 
We have introduced send-receive earlier.
As we note in Section~\ref{sec:app-main}, 
oblivious aggregation and  propagation
%
can be realized in the binary fork-join model within the scan bound.

During each PRAM step, the scheme in~\cite{opramdepth}  relies on oblivious sorting and oblivious propagation  
to 
1) perform some preprocessing of the batch of $p$ memory requests;
2) use oblivious routing and attempt to fetch the requested elements and their position labels 
from the pools if they exist in the pools;
and 
3) perform some pre-processing to needed to look up the trees 
(since the requested elements may not be in the pools).
At this point, they sequentially look through all the $\log s$ trees, where in each
tree, a single path from the root to some random leaf is visited, taking $O(\log \log s)$ 
parallel time per tree, and $O(\log s \log \log s)$ depth in total. All other
steps in the scheme can be accomplished in $O(\log s)$ PRAM depth and thus  
in Chan et al.~\cite{opramdepth}, the sequential ORAM tree lookup phase is the bottleneck
in depth.

After the aforementioned fetch phase, they then 
need to perform maintenance operations to maintain
the correctness of the data structure: first, the fetched
elements and their position labels must be removed from the ORAM trees. 
To perform the removal in parallel without causing
write conflicts, some coordination effort is necessary
and the coordination can be achieved through using oblivious sorting 
and oblivious aggregation.
Next, 
the algorithm performs a maintenance operation called ``evictions'', where selected elements
in the pool are evicted back into the ORAM trees.
Each of the $\log s$ trees will have exactly two paths touched during the maintenance phase, 
and oblivious
sorting and oblivious routing techniques are used to select appropriate elements
from the pools to evict back into the ORAM trees.
Finally, a pool clean up operation is performed to compress the pools' size 
by removing filler elements acquired during the above steps --- this can be accomplished
through oblivious sorting.
We refer the reader to Chan et al.~\cite{opramdepth}
for a full exposition of the techniques.

\paragraph{Implementing Chan et al.~\cite{opramdepth}'s algorithm in the binary fork-join model.}
}

To efficiently simulate any CRCW PRAM as a data-oblivious, binary fork-join program, 
we make the following modifications to
the algorithm in~\cite{opramdepth}.
First, we switch several core primitives they use to our cache-agnostic, binary
fork-join implementations:
we replace their oblivious sorting algorithm with our new 
sorting algorithm in the cache-agnostic, binary fork-join model; and
we replace their oblivious ``send-receive'', ``propagation'', and ``aggregation'' primitives with 
our new counterparts in the 
cache-agnostic, binary fork-join model (see Section~\ref{sec:app-main}).

Second, we make the following  modifications 
to improve the cache complexity:

\begin{itemize}[leftmargin=5mm,itemsep=1pt,topsep=3pt]
\item 
We store all  the binary-tree data structures
(called ORAM trees)
in Chan et al.~\cite{opramdepth} in 
an Emde Boas layout.
In this way, accessing a tree path of length $O(\log s)$ 
incurs only $O(\log_B s)$ cache misses.
\item 
The second modification is in the 
``simultaneous removal of visited elements'' 
step in the maintain phase of the algorithm in~\cite{opramdepth}.
In this subroutine, for each of $O(\log s)$ recursion levels: 
each of the $p$ processors populates 
a column of a $(\log s) \times p$ matrix, and then oblivious aggregation 
is performed on each row of the matrix.
To make this cache efficient, we can have each of the $p$ 
threads populate a row  
of a $p \times (\log s)$ matrix, and then perform matrix transposition.
Then we then apply oblivious aggregation to each row of the transposed
matrix.
\end{itemize}

Plugging in these modifications to Chan et al.~\cite{opramdepth}
we obtain the following theorem:

\begin{theorem}[PRAM-on-OBFJ: Simulation of CRCW PRAM on Oblivious, binary fork-join]\label{thm:general-CRCW-on-obfj}
Suppose that $M > \log^{1+\epsilon } s$, $s\geq M \geq B^2$, and $s \geq p$.
Any $p$-processor CRCW PRAM
using at most $s$ space can
be converted to a functionally equivalent, oblivious 
program in the cache-agnostic, binary fork-join model, where each parallel step in the CRCW
can be simulated
with $O(p \log^2 s)$ total work, $O(\log s \cdot \log \log s)$ span,
and $O(\log s \cdot ((p/B) \cdot \log_M p  + p \cdot \log_B s))$
cache complexity.
\end{theorem}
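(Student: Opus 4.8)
I would treat the Oblivious Parallel RAM of Chan, Chung and Shi~\cite{opramdepth} (building on~\cite{circuitopram}) as a template for the \emph{logical} algorithm, and re-cost it after (a) substituting each oblivious subroutine it invokes with a cache-agnostic binary fork-join implementation and (b) applying the two layout changes described just above the statement. Recall the template: there are $O(\log s)$ recursion levels; level $i$ maintains an ORAM tree with $2^i$ leaves whose top $\log_2 p$ levels are collapsed into a pool of size $O(p)$, so the tree is a forest of $O(p)$ subtrees; and simulating one PRAM step decomposes into (i) a preprocessing/routing phase operating, at each level, on $O(p)$-element arrays; (ii) a fetch phase that walks one root-to-leaf path in each of the $O(\log s)$ trees, the trees visited one after another because each supplies the position labels needed by the next; and (iii) a maintenance phase (removal of visited elements, evictions, pool cleanup) that again operates on $O(p)$-element arrays per level. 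I would replace every oblivious sort by {\sc Butterfly-Sort} (Theorem~\ref{thm:sort}) and every oblivious send-receive, propagation and aggregation by the binary fork-join versions of Section~\ref{sec:app-main} (send-receive within the sorting bound; propagation and aggregation within the scan bound).

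For phases (i) and (iii): each recursion level performs $O(1)$ sorts, send-receives and scans on arrays of size $O(p)$, so by Theorem~\ref{thm:sort} (instantiated with input size $p$, which is valid since $M > \log^{1+\epsilon} s \ge \log^{1+\epsilon} p$) a level costs $O(p\log p)$ work, $\widetilde{O}(\log p) = \widetilde{O}(\log s)$ span, and $O((p/B)\log_M p)$ cache misses. The $O(\log s)$ levels are data-independent, so I would dispatch them in parallel (forking a binary tree of depth $O(\log\log s)$ over the levels); phases (i) and (iii) together then contribute $O(p\log^2 s)$ work, $\widetilde{O}(\log s)$ span, and $O(\log s\cdot(p/B)\log_M p)$ cache. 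The one step of~\cite{opramdepth} that is not cache-friendly under this accounting is ``simultaneous removal of visited elements'', where $p$ threads populate the columns of a $(\log s)\times p$ matrix before aggregating along rows; I would instead have the threads populate the rows of a $p\times(\log s)$ matrix and insert a cache-agnostic matrix transposition (cost $O(p\log s/B)$ under the tall-cache assumption $M\ge B^2$) before aggregating, which is absorbed into the budget above.

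For phase (ii): I would fork the $p$ threads once and let each carry out its $O(\log s)$ tree lookups in a sequential loop (forced to be sequential by the position-map dependency). Storing each ORAM tree in a van Emde Boas layout makes a root-to-leaf path of $O(\log s)$ buckets span $O(\log_B s)$ cache lines, and a thread can read that path with an internal fork tree of depth $O(\log\log s)$ and extract its record by an oblivious parallel reduction, matching the $O(\log\log s)$ per-tree PRAM time of~\cite{opramdepth}. Hence phase (ii) costs $O(\log s\cdot\log\log s)$ span, $O(p\log^2 s)$ work (each of $p$ threads reads $O(\log s)$ trees of $O(\log s)$ buckets), and $O(\log s\cdot p\log_B s)$ cache misses; the fact that up to $O(p)$ — rather than exactly $p$ — paths may be touched in a tree, owing to collisions, is the negligible-probability overflow event already analyzed in~\cite{opramdepth} and does not affect these bounds. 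Summing the three phases gives per-step work $O(p\log^2 s)$, span $O(\log s\cdot\log\log s)$, and cache complexity $O\big(\log s\cdot((p/B)\log_M p + p\log_B s)\big)$, as claimed. Functional equivalence and data-obliviousness are inherited from~\cite{opramdepth}, since every change — substituting functionally-equivalent oblivious subroutines, a fixed data layout, an inserted transposition — has an input-independent access pattern, so the overall simulator is the composition of the sub-simulators.

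\textbf{Main obstacle.} The conceptual content is inherited from~\cite{opramdepth}; the real work is the cache-complexity bookkeeping. The delicate points are: verifying that every array manipulated at recursion level $i$ genuinely has size $O(p)$ and not $O(2^i)$, so that the bound carries a $\log_M p$ rather than a $\log_M s$ factor; checking that the van Emde Boas layout is compatible both with cache-agnosticism and with the obliviousness argument of~\cite{opramdepth}; and confirming that the collision/overflow bounds of~\cite{opramdepth} (which hold only up to negligible failure probability) still yield the stated bounds for all three measures — this is where the paper's $o(1/n^k)$ error budget is spent — and that dispatching the $O(\log s)$ per-level subproblems in parallel introduces no write conflicts in the shared pools, which again follows from the conflict-freeness already established in~\cite{opramdepth}.
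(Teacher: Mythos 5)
Your proposal is correct and follows essentially the same route as the paper: substitute the cache-agnostic binary fork-join primitives into the Chan--Chung--Shi template, store the ORAM trees in van Emde Boas layout, fix the simultaneous-removal step with a matrix transposition, and observe that the cache cost is dominated by the $O(\log s)$ per-level sorts on $O(p)$-size arrays plus the $p\log s$ tree-path accesses. Your write-up is in fact somewhat more explicit than the paper's (e.g., in noting that the non-fetch phases must be dispatched in parallel across recursion levels to keep the span at $O(\log s \cdot \log\log s)$), but the argument is the same.
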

\begin{proof}
The total work is the same as Chan et al.~\cite{opramdepth}
since none of our modifications incur asymptotically more work.
\hide{
Specifically, all of our primitives, including
sorting, aggregation, propagation, and send-receive are optimal in total work.
Further, the modifications needed for cache complexity do not incur additional work.
}

Our span matches the PRAM depth of Chan et al.~\cite{opramdepth}, that is,
$O(\log s \log \log s)$, because all of our primitives, including
sorting, aggregation, propagation, and send-receive incur at most  
$O(\log s \log \log s)$ span.
The bottleneck in PRAM depth of Chan et al.~\cite{opramdepth}
comes not from the sorting/aggregation/propagation/send-receive, 
but from having to {\it sequentially} visit $O(\log s)$
recursion levels during the fetch phase, where for each recursion level,  
we need to look at a path in an ORAM tree of length $O(\log s)$, and find the
element requested along this path --- this operation takes $O(\log \log s)$ PRAM  
depth as well as $O(\log \log s)$
span in a binary fork-join model.
Finally, the new matrix transposition modification in the ``simultaneous removal'' step
does not additionally increase the span.

For cache complexity, 
there are two parts, the part that comes from 
$\log s$ number of oblivious sorts on $O(p)$ number of elements ---
this incurs 
$O((p/B) \cdot \log_M p\cdot \log s)$
cache misses in total.
The second part comes from having to access 
$p \cdot \log s$ ORAM tree paths in total where each path is of length $O(\log s)$.
If all the ORAM trees are stored in Emde Boas layout,
this part incurs 
$O(p \cdot \log s \cdot \log_B s)$ cache misses.
These two parts dominate the cache complexity, and the caching cost of all 
other operations is absorbed by the sum of these two costs.
\end{proof}

This result can be viewed as a generalization of the prior best 
Oblivious Parallel RAM (OPRAM) 
result~\cite{circuitopram,opramdepth}.
The prior result~\cite{circuitopram,opramdepth}
shows that any $p$-processor CRCW
PRAM can be simulated with an oblivious CREW PRAM 
where each CRCW PRAM step is simulated 
with $p \log^2 s$ total work and $O(\log s \log \log s)$ parallel runtime (without
the binary-forking requirement). 
Essentially our result matches the best known 
result, and we further show that the extra binary-forking requirement 
does not incur additional overhead during this simulation.
We additionally achieve cache-agnostic cache efficiency:
The prior work~\cite{circuitopram,opramdepth} also did not explicitly consider cache complexity.


\section{Applications}\label{sec:app-main}
\label{sec:apps}

We sketch our results for
 various applications of our new sorting algorithm,
assuming knowledge of results for the non-oblivious case. (Self-contained
descriptions of the algorithms are in the 
Appendix, Section~\ref{sec:app}.)
All of our  data-oblivious algorithms either match or outperform the best known bounds
for their insecure counterparts (in the cache-agnostic, binary fork-join model).
 Euler tour and tree contraction were  also considered 
in data-oblivious algorithms~\cite{oblgraph00,oblgraph01,blantongraph}
but the earlier works are inherently sequential.

{\bf Basic data-oblivious primitives.}   {\it Aggregration and propagation in a sorted array} were primitives
used in the simulation algorithms in Section~\ref{sec:opram}. These can be readily formulated as
segmented prefix sums and can be computed data obliviously within the scan bound ($O(\log n)$ span,
$O(n)$ work and $O(n/B)$ cache-agnostic cache misses for an $n$-array.) The third primitive 
used in~Section~\ref{sec:opram}, {\it send-receive},
 can be performed obliviously with a constant number of sorts and scans and hence 
achieves the sort bound. (See 
Appendix~\ref{sec:oblivbldgblock}).

{\bf List Ranking and Applications.}
To realize list ranking obliviously, 
we first apply {\sc Rec-OPerm} to randomly permute the input array and then apply 
a non-oblivious list ranking algorithm~\cite{CR12} to match the bounds for the insecure case:
$O(W_{sort}(n))$, $O(Q_{sort}(n))$, and $O(T_{sort}(n) \cdot \log n)$.

For Euler tour in a unrooted tree we use the insecure algorithm~\cite{KR90,CR12}  that reduces the problem to
list ranking. In this reduction there is an insecure step where each edge $(u,v)$ needs to  locate the successor of
edge $(v,u)$ in $v$'s circular adjacency list. This can be seen as an instance of send-receive and hence can be performed
obliviously within the sort bound, so we can compute the Euler tour obliviously with the same bounds as list ranking.
Once we have an Euler we can compute common tree functions (e.g., pre and post order numbering and depth of each
node) for any given root with list ranking.
Thus we obtain the bounds  stated in part $(i)$ of Theorem~\ref{thm:conn-msf-main} .

\hide{
  following theorem and we
match the state-of-the-art, non-oblivious, cache-agnostic, parallel 
list ranking algorithm~\cite{CR12}.

\begin{theorem}[List ranking]
Assume that $M \geq \log^{1+\epsilon} n$ and that $n \geq M \geq B^2$. 
Then, we can obliviously realize 
list ranking, Euler tour and many tree functions achieving span $O(\log^2 n \cdot \log \log n)$,
cache complexity $O((n/B) \log_M n)$, and
total work $O(n \log n)$.
Moreover, the algorithm is cache-agnostic.
\end{theorem}
}%

{\it Results Using PRAM Simulation.}
Using our PRAM simulation results in Section~\ref{sec:opram} we obtain oblivious algorithms with
 improved bounds for over what is generally claimed for the insecure algorithms.
Our data-oblivious algorithms are randomized  due to the use of our randomized sorting algorithm;
for the prior best insecure algorithms it suffices to use
the SPMS sorting algorithm~\cite{CR17} and therefore, they are deterministic (but not data oblivious).

\paragraph{Tree contraction.}
The EREW PRAM tree contraction algorithm of Kosaraju and Delcher~\cite{KD97} 
runs in  $\log n$ phases where the $i$-th phase, $i\geq 1$,  performs a constant number 
of parallel steps with  $O(n/c^{i-1})$
work on $O(n/c^{i-1})$ data items, for a constant $c>1$.
Further, the rate of decrease is fixed and data independent, and at
the end of each phase, 
every memory location knows whether it is still needed in future computation.
Since the memory used is geometrically decreasing in successive phases,
 a constant fraction of the memory locations  
will be no longer needed at the end of each phase.

We now apply our earlier PRAM simulation result in  Theorem~\ref{thm:pram-emul}  
in a slightly non-blackbox fashion: We  simulate each PRAM phase, 
always using the actual number of processors
needed in that step,  
which is the work for that step in the work-time formulation.
Additionally, we introduce the following modification:
at the end of each phase, we
use oblivious sort move the memory entries no longer needed to the end, 
effectively removing them from the future computation.
This gives the first result in  
in Theorem~\ref{thm:conn-msf-main} below.

\paragraph{Connected components and minimum spanning forest.}
These two problems can be solved on the PRAM on an $n$-node, $m$-edge graph
 in $T(n)=O(\log n)$ parallel time and with $O(m+n)$ space and number of processors~\cite{vishkinCC,PR02}.
Hence 
using Theorem~\ref{thm:pram-emul} for each of the $T(n)$ steps we achieve the bounds stated in part $(ii)$ of the
theorem below.  This improves over the generally cited bounds for the insecure case, which are obtained through explicit
algorithms and have span that is large (i.e. slower) by a $\log n$ factor.

\begin{theorem}\label{thm:conn-msf-main}
Suppose that $M > \log^{1+\epsilon } (m+n)$ and $m+n \geq M \geq B^2$.
Then, we have the following bounds for oblivious cache-agnostic, binary fork-join algorithms.

(i) List ranking, Euler tour and tree functions, and tree contraction in
$O(W_{\rm sort}(n))$ work, $O(Q_{\rm sort}(n))$ cache complexity,
and $O(\log n \cdot T_{\rm sort}(n))$ span, where $n$ is the size of the input. 

(ii)
Connected components and minimum spanning forest in a graph with $n$ nodes 
and $m$ edges in  $O(\log n \cdot W_{\rm sort}(m+n))$ work, 
$O(\log n \cdot Q_{\rm sort}(m+n))$ cache complexity, and
$O(\log n \cdot  T_{\rm sort}(m+n))$ span. 
\end{theorem}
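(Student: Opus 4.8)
The plan is to establish parts (i) and (ii) by two reductions, both resting on primitives already built: {\sc Butterfly-Sort} (Theorem~\ref{thm:sort}), the oblivious random permutation {\sc B-RPermute} (whose bounds are those of Lemma~\ref{orba-recursive}; see Section~\ref{sec:summarysort}), the oblivious send-receive primitive (which meets the sort bound), and, for tree contraction and the graph problems, the space-bounded simulation of Theorem~\ref{thm:pram-emul}. The recurring idea is that a non-oblivious parallel algorithm whose only data-dependent memory behaviour is \emph{relative} --- following successor pointers, or comparison-sorting --- becomes oblivious once its input has been obliviously randomly permuted; for the remaining problems the CRCW-PRAM simulation already supplies obliviousness, so one only has to control the cost of iterating it.

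\emph{List ranking, Euler tour, tree functions.} For list ranking I would attach each node's original position to it, run {\sc B-RPermute} on the $n$ nodes, then obliviously relabel the successor pointers with one send-receive so that the result is a genuinely uniformly random $n$-node list, run the cache-efficient non-oblivious parallel list-ranking algorithm of~\cite{CR12} on this list, and finally restore the input order with one call to {\sc Butterfly-Sort} keyed by the stored positions. Obliviousness is the standard ``non-oblivious algorithm on a randomly permuted input'' argument already used for sorting in~\cite{bucketosort,cacheoblosort}: the access patterns of {\sc B-RPermute} and of the final sort are simulatable from $n$ alone and, by the definition of an oblivious random permutation, reveal nothing about the chosen permutation; the access pattern that~\cite{CR12} produces on its input is --- since~\cite{CR12} is comparison- and pointer-following in nature --- a distribution determined by that input and by~\cite{CR12}'s own coins, and since the input is uniformly random over $n$-node lists regardless of $\mathbf{I}$, this distribution depends only on $n$; and the output, the ranks written back in original order, is a deterministic function of $\mathbf{I}$. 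Hence the joint real-world distribution of output and access pattern equals $(\mathcal{F}(\mathbf{I}),\Sim(n))$ up to negligible statistical distance, which is exactly data obliviousness. The costs are $O(W_{\rm sort}(n))$ work, $O(Q_{\rm sort}(n))$ cache, and $O(\log n\cdot T_{\rm sort}(n))$ span, the $\log n$ coming from the $O(\log n)$ contraction rounds of~\cite{CR12}, each a constant number of sorts and scans. For Euler tour I would use the standard reduction of~\cite{KR90,CR12} from an unrooted tree to a circular list of $2(n-1)$ directed edges: its only non-local step gives each directed edge $(u,v)$ the successor of $(v,u)$ in $v$'s cyclic adjacency list, which is exactly one send-receive (senders keyed by the directed edge $(v,u)$ carrying $\mathrm{succ}_v(v,u)$, the keys over all directed edges being distinct; receiver $(u,v)$ requests key $(v,u)$), hence within the sort bound; the resulting list is ranked as above, and standard prefix-sum/list-ranking post-processing on the tour (within the scan and sort bounds) yields preorder and postorder numbers, depths, subtree sizes, and so on. All three problems then meet the bounds of part (i).

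\emph{Tree contraction; connected components; minimum spanning forest.} For tree contraction I would run the Kosaraju--Delcher algorithm~\cite{KD97}: $O(\log n)$ phases, with phase $i$ a constant number of CRCW-PRAM steps on $\Theta(n/c^{i-1})$ data items and processors, the shrinkage rate being fixed and data-independent and the set of $\Theta(n/c^{i})$ surviving cells being publicly known after phase $i$. I would apply Theorem~\ref{thm:pram-emul} phase by phase with $p=s=\Theta(n/c^{i-1})$ (for the tail phases with $s<M$ the same simulation applies with cache cost $O((p+s)/B)\le O(Q_{\rm sort}(p+s))$), and after each phase run one {\sc Butterfly-Sort} to compact the surviving cells into a prefix of length $\Theta(n/c^{i})$. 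This is oblivious because each phase's simulation is oblivious, the compaction sorts are oblivious, and the only extra leakage is the input-independent sequence of phase sizes. The span is $\sum_i O(T_{\rm sort}(n/c^{i-1})) = O(\log n\cdot T_{\rm sort}(n))$; since $W_{\rm sort}(x)=O(x\log x)$ and $Q_{\rm sort}(x)=O((x/B)\log_M x)$ and the arguments decrease geometrically, $\sum_i O(W_{\rm sort}(n/c^{i-1})) = O(W_{\rm sort}(n))$ and $\sum_i O(Q_{\rm sort}(n/c^{i-1})) = O(Q_{\rm sort}(n))$, giving the tree-contraction bounds and completing part (i). For part (ii), connected components and minimum spanning forest admit PRAM algorithms running in $O(\log n)$ parallel time with $p=s=O(m+n)$ processors and space~\cite{vishkinCC,PR02} (after standard amplification to negligible error where needed); applying Theorem~\ref{thm:pram-emul}, whose hypotheses are exactly the assumed $M>\log^{1+\epsilon}(m+n)$ and $m+n\ge M\ge B^2$, to each of the $O(\log n)$ steps and summing yields $O(\log n\cdot W_{\rm sort}(m+n))$ work, $O(\log n\cdot Q_{\rm sort}(m+n))$ cache, and $O(\log n\cdot T_{\rm sort}(m+n))$ span.

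\emph{Main obstacle.} The delicate point is the obliviousness argument for list ranking and Euler tour: one must argue that running a non-oblivious, internally comparison-based pointer-chasing algorithm on the output of {\sc B-RPermute} leaks nothing beyond $n$ --- in particular that its middle-phase access pattern is independent not only of the input but also of the publicly revealed output --- so that the real-world joint distribution is simulatable from $\mathcal{F}(\mathbf{I})$; this is precisely where the shape of the data-obliviousness definition, a joint distribution over output and access pattern with the functionality drawing fresh coins, does real work, and where the relabeling step after {\sc B-RPermute} is essential (without it, later sorts inside~\cite{CR12} would be comparison-sorting the true, non-random list). The cost bookkeeping --- the geometric sums for tree contraction, the mild $s<M$ tail cases, and checking that~\cite{CR12}'s caching and span bounds transfer verbatim to the binary fork-join model --- is routine by comparison.
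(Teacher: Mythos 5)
Your proposal is correct and follows essentially the same route as the paper: obliviously random-permute and relabel the list, then run the non-oblivious list-ranking algorithm of~\cite{CR12} (with send-receive handling the one non-local step of the Euler-tour reduction); simulate Kosaraju--Delcher phase by phase via Theorem~\ref{thm:pram-emul} with an oblivious-sort compaction after each geometrically shrinking phase; and apply Theorem~\ref{thm:pram-emul} to each of the $O(\log n)$ steps of the standard connectivity/MSF PRAM algorithms. Your write-up is in fact more explicit than the paper's about why the ``non-oblivious algorithm on a randomly permuted input'' argument yields simulatability of the joint distribution, a point the paper delegates to the cited analyses of~\cite{bucketosort,cacheoblosort}.
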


\section*{\LARGE Appendices}

\appendix

\section{Computation Model}\label{sec:comp-model}
As mentioned earlier 
in Section~\ref{sec:intro}, binary fork-join is 
widely accepted 
in the algorithms literature as 
the model that 
best captures the 
performance of parallel algorithms on {\it multicore processors}.
We present some background on this model and how to measure
the efficiency of algorithms in this model.

\subsection{Multithreaded, Binary Fork-Join Computation Model}\label{sec:model}

We consider a multicore environment consisting of
$P$ cores (or processors), each with a private cache of size $M$.
The processors communicate through an arbitrarily large shared memory.
Data is organized in blocks (or `cache lines') of size $B$.

We will  express parallelism through paired fork and join operations (See Chapter 27 in Cormen et al.~\cite{clr} for an introduction to this model).
The execution of a binary fork-join algorithm starts at a single processor. 
A thread can rely on a {\it fork} operation to
spawn two tasks; and the forked parallel tasks can be executed by
other processors, as determined by a scheduler. 
The two tasks
will later {\it join}, and the 
corresponding join serves as a
synchronization point: both of the spawned tasks must complete before the computation
can proceed beyond this join. 
The memory accesses are CREW: concurrent reads are allowed at a memory location but no concurrent writes at a location.

The {\it work} of a multithreaded algorithm ${\sf Alg}$
 is the total number of operations it executes; equivalently,
it is the sequential time when it is executed on one processor. The {\it span} of ${\sf Alg}$, often called
its  
critical path length or parallel time $T_{\infty}$, is the number of 
parallel steps in the algorithm.

A widely used scheduler for multithreaded algorithms is
randomized work-stealing~\cite{bfork04}, which schedules the parallel tasks in a 
binary fork-join computation with
work $W$ and span $T_{\infty}$ on $P$ 
processors to run in $O((W/P) + T_{\infty})$ time with
polynomially small probability of failure in $P$, a result shown by 
Blumofe and Leiserson~\cite{bfork04}.

\paragraph{Additional discussion on the modeling.}
One can consider multi-way forking in place of requiring binary forking, but in this paper we choose to
stay with binary fork-joins since multithreaded algorithms with nested binary fork-joins have
good scheduling results under randomized work-stealing~\cite{bfork04,bfork01,CR13}.
One can also consider recent extensions made to this model such as using atomics (e.g., {\tt test-and-set})  to achieve synchronizations outside of
joins complementing forks as in the recent binary forking model~\cite{optimalbfork}, or using 
concurrent writes~\cite{optimalbfork, AC+20} to reduce the span. 
We will not use this extra power in our algorithms. So, in this paper
we deal with
a CREW multithreaded model with binary fork-joins: memory accesses are CREW; 
the forks expose parallelism; synchronization occurs only at joins;
and any pair of fork-join computations are either disjoint or one is nested within the other. 
We  will refer to such a computation as a {\it binary fork-join} algorithm, or simply a multithreaded algorithm.
Currently, for sorting, the best binary fork-join algorithm with optimal cache-efficiency is SPMS, and SPMS is
also the best cache-optimal sorting algorithm currently known even if atomics are allowed.

\subsection{Cache-efficient Computations}

 Memory in modern computers
 is typically organized in a hierarchy 
 with registers in the lowest level followed by several levels of 
 caches (L1, L2 and possibly L3), RAM, and disk.
 The access time and size of each level increases with its depth,
 and block transfers are used between adjacent levels to amortize
the access time cost.
 
\paragraph{External-memory algorithms.}
 The {\em two-level I/O model} in Agarwal and Vitter~\cite{AV88} is a simple abstraction
 of the memory hierarchy that consists of  a {\em cache} 
 (or {\em internal memory}) of size $M$, and an arbitrarily
 large {\em main memory} (or {\em external memory}) partitioned 
 into blocks of size $B$. An algorithm is said to have caused a 
 {\em cache-miss} (or {\em page fault}) if it references a block 
 that does not reside in the cache and must be fetched from the main memory.
 The {\em cache complexity} (or {\em I/O complexity}) of an algorithm is
 the number of block transfers or I/O operations it causes, which is
equivalent to the number of cache misses it incurs.
 Algorithms designed for this model often crucially
 depend on the knowledge of $M$ and $B$, and thus do 
 not adapt well when these parameters change --- such algorithms
are also called cache-aware algorithms.

\paragraph{Cache-agnostic algorithms.}
 The {\em cache-agnostic model} (originally called the cache-oblivious model),
proposed by 
Frigo et al.~\cite{frigo1999cache}, is an extension
 of the two-level I/O model which assumes that an optimal cache
replacement policy is used, and requires that the algorithm 
{\it be unaware of cache parameters $M$ and $B$}.
 A {\em cache-agnostic} algorithm is flexible and portable: 
if a cache-agnostic algorithm achieves optimal cache complexity,
it means that the number of I/Os are minimized  
in between any two adjacent levels in the memory hierarchy (e.g., between
CPU and memory, between memory and disk).
The assumption of an optimal cache replacement
 policy can be reasonably approximated by a
standard cache replacement method such as LRU.

\paragraph{Cache bounds for sort and scan, tall-cache assumption.}
Without the data obliviousness requirement and on a {\it sequential} machine,
the cache complexity for two basic computational tasks,
scan and sort, are known: the number
of cache-misses incurred to read $n$ contiguous data items from the main
memory is $Q_{\rm scan}(n) = \Theta(n /B)$ and the number of cache-misses incurred to
sort $n$ data items is $Q_{\rm sort}(n) = \Theta({n \over B}\log_{{M \over B}}
{{n \over B}})$~\cite{AV88} --- these bounds are both upper- and lower-bounds.
For most realistic values of $M$, $B$ and
$n$, e.g., under the tall cache-assumption which we also make, 
$Q_{\rm scan}(n) < Q_{\rm sort}(n) \leq n$.  
More specifically, the standard {\it tall cache} assumption
assumes that $M= \Omega (B^{1+\epsilon})$; and 
many works also additionally assume 
that $M = O(\log^{1+\epsilon} n)$ where $\epsilon \in (0, 1)$ is an arbitrarily
small constant.
The assumption $M= \Omega (B^{1+\epsilon})$ is also necessary
for obtaining the optimal sort bound.


\subsection{Cache-Efficiency in Binary Fork-Join}
\label{sec:cache-eff-bfork}
The cache complexity of a binary fork-join algorithm is the total number of cache misses incurred across
all processors during the execution. 
Let ${\sf Alg}$ be a binary fork-join algorithm with work $W$  and span $T_{\infty}$,
and let $Q$ be its cache complexity on a {\it sequential} machine, 
i.e., ${\sf Alg}$'s cache complexity when executed on one processor. 
If  algorithm ${\sf Alg}$  is executed on $P$ processors using randomized work stealing, then it is shown by Acar et al.~\cite{bfork01}
that the number of cache misses in this parallel execution is 
is $O(Q + (M/B)\cdot P \cdot T_{\infty})$ with polynomially low probability of failure.
Since we typically operate with an input size $n$ that is greater than the total available cache size, i.e., $n \geq P\cdot M$,
this bound will be close to or dominated by the caching cost $Q$ if the span is small.

The above scheduling result~\cite{bfork01} suggests the  following paradigm 
for designing efficient, parallel algorithms in the binary fork-join model:
we would like to design fork-join algorithms with small {\it work}, 
small {\it sequential cache complexity}, 
preferably {\it cache-agnostic}, and small {\it span}.

\paragraph{Cache-efficient sorting in the binary fork-join model.}
Let $T_{\rm sort}(n)$ be the smallest span for a binary 
fork-join algorithm that sorts $n$ elements with a comparison-based sort.
It is readily seen that $T_{\rm sort}(n) = \Omega (\log n)$.
The current best binary fork-join algorithm for sorting is SPMS (Sample Partition Merge Sort)~\cite{CR17}. 
This algorithm has span $O(\log n \cdot \log \log n)$ with
optimal work $W_{\rm sort} = O(n \log n)$ and optimal cache complexity $Q_{\rm sort}(n)$. But the SPMS algorithm is not data-oblivious.
No non-trivial data oblivious parallel 
sorting algorithm with optimal cache complexity was known prior to the results we present in this paper.

{\bf Additional Discussion.}
When designing cache-agnostic algorithms with parallelism, one would need to address the issue of
false-sharing, since without the knowledge of the block-size $B$, 
it is inevitable that a block that is written into could be 
shared across two or more processors. This would lead to the potential for false sharing. The design of algorithms that mitigate the cost of false
sharing is addressed in the resource-oblivious framework in the work of 
Cole and Ramachandran~\cite{CR12,CR17}, and the SPMS algorithm in~\cite{CR17} is designed to have
low worst-case false sharing cost. For the most part, the algorithms we develop here are in the 
Hierarchical Balanced Parallel (HBP)
formulation developed by Cole and Ramachandran~\cite{CR12,CR17}, and hence incur 
an overhead of no more than
$O(B)$ cache miss cost due to false sharing for each parallel task scheduled in the computation.
The one exception to HBP is the use of certain small subproblems, all of size smaller than $M$.

\section{Definitions: Data-Oblivious Binary Fork-Join Algorithms}\label{sec:d-o}
We define the notion of a data-oblivious binary fork-join algorithm. 
Unlike 
``cache obliviousness'', 
data obliviousness captures the privacy requirement of a program.
In our paper, we will consider {\it statistical} data-obliviousness, i.e.,
except with negligibly small probability, 
no information should be leaked through the program's ``access patterns'' (to be defined
later) 
even in the presence of a {\it computationally unbounded} adversary.

To understand the requirements of data obliviousness, 
it helps to think about a multicore machine with trusted hardware   
support (e.g., Intel SGX + hyperthreading). 
On such a secure, multicore architecture, 
any data that is written to memory is encrypted under a secret key  
known only to the secure processors. Therefore, unless the adversary can break  
into the secure processor's hardware sandbox, it cannot observe
any memory contents during the execution.
Such an adversary can observe
\begin{enumerate}[leftmargin=5mm,itemsep=0pt]
\item 
the computation DAG that captures the pattern of forks and joins 
during the course of the execution;
\item the sequence of memory {\it addresses}
accessed during 
every CPU step of every {\it thread} (but not the data contents), 
and whether each access is a read or write\footnote{As mentioned, in the binary fork-join model, the scheduler
decides which thread to map to which processor. 
We may assume that the adversary has full control of the scheduler.};
\item 
even when the processor is accessing words in the
cache, the adversary can observe which exact address 
within which block is being requested --- note that allowing
the adversary to observe this allows our data-oblivious algorithm
to secure against the well-known cache-timing attack (see also Remark~\ref{rmk:obldefn}). 
\end{enumerate}

The above observations jointly form the adversary's {\it view}
in the execution --- in our paper, we also call the adversary's view
the ``access patterns'' of the execution.

We now extend the data-oblivious framework developed in earlier work for sequential and PRAM parallel environments~\cite{oram00,oram10,optorama,asiacrypt11,circuitopram} to
binary fork-join algorithms.
Henceforth, we use the notation
$${\bf O}, {\sf view} \leftarrow  {{\sf Alg}({\bf I})}$$
to denote a possibly randomized execution of the algorithm ${\sf Alg}$
on the input ${\bf I}$.
The randomness of the execution can come from the algorithm's internal random coins.
The execution produces 
an output array ${\bf O}$, and the adversary $\algA$'s view during
the execution is denoted ${\sf view}$.

\begin{definition}[Data-oblivious algorithms in the binary-fork-join model]
We say that an algorithm ${\sf Alg}$ 
$\delta$-data-obliviously realizes 
 a possibly randomized functionality $\mcal{F}$, 
iff there exists a simulator $\Sim$ such that 
for any input ${\bf I}$ of length $n$, 
\[
({\bf O}, {\sf view}) \overset{\delta}{\equiv} (\mcal{F}({\bf I}), \Sim(n))
\]
where 
$({\bf O}, {\sf view})$ denotes the joint distribution
of the output and the adversary's view upon a random 
invocation of ${{\sf Alg}({\bf I})}$,
and 
$\overset{\delta}{\equiv}$ means that the left-hand side
and the right-hand side must have  
statistical distance at most $\delta$.
Note that $\delta$ can be a function 
of $n$, i.e., the length of the input. 
\label{defn:obl}
\end{definition}

\paragraph{Typical choice of $\delta$.}
Typically, we want the failure probability $\delta$ to be negligibly
small in 
the input length\footnote{Sometimes, we may want that the failure probability
be negligibly small in some security parameter $\kappa \in \N$ --- however, in this paper, 
we care about the asymptotical behavior of our algorithms
when the problem size $n$ is large.
Therefore, without loss of generality, we may assume that $n \geq \kappa$ and 
in this case, negligibly small in $n$ also means negligibly small in $\kappa$.
} 
denoted $n$.
A function $\delta(\cdot)$ is said to be a negligible 
function, iff it drops faster than any inverse-polynomial function.
More formally, we say that $\delta(\cdot)$ is a negligible function,
iff for any constant $c \geq 1$,  
there exists a sufficiently large $n_0 \in \N$ such that for all $n \geq n_0$,
$\delta(n) \leq 1/n^c$.

Whenever we say that ${\sf Alg}$ {\it data-obliviously 
realizes  (or obliviously realizes)}
a functionality $\mcal{F}$ omitting the failure probability $\delta$, we mean
that there exists a negligible function $\negl(\cdot)$, such 
that ${\sf Alg}$ $\negl(n)$-data-obliviously realizes $\mcal{F}$.

\begin{remark}[Real-world interpretation of the obliviousness definition]
On multicore architectures, 
the processors execute at asynchronous speeds due to various factors 
including cache misses, clock speed, etc. 
Since our model allows the adversary to observe even accesses 
within the processors' caches, our notion of data-obliviousness
precludes the adversary from gaining secret information through  
any cache-miss-induced timing channels.
However, just like in all prior
works on data-oblivious algorithms and Oblivious RAM~\cite{oram00,oram10,asiacrypt11,cacheoblosort}, 
data-obliviousness does not provide full protection against 
timing channel leakage caused by  
the processor's branch predictors, speculative execution, 
and other pipelining behavior that are data dependent
(although 
it has been shown that 
the lack of data-obliviousness
can sometimes exacerbate attacks that exploit 
timing channel differences caused by the processors' data-dependent pipelining behavior).

Note that the adversary, e.g., 
a malicious OS, also controls the scheduler that maps threads to available processors. 
However, since processors like SGX encrypt any data in memory, 
the adversary is unable to base scheduling decisions on data contents.
Indeed, the standard parallel
algorithms literature considers schedulers 
that are data-independent too~\cite{bfork04,bfork05,vijayasort,CR17,lowdepthco}.
\label{rmk:obldefn}
\end{remark}

\begin{remark}[Alternative definition for deterministic functionalities]
If the functionality $\mcal{F}$ is deterministic
(e.g., the sorting functionality), we can alternatively define
$\delta$-data-obliviousness with a slightly more straightforward definition.
We may say that an algorithm ${\sf Alg}$ $\delta$-data-obliviously 
realizes $\mcal{F}$, 
iff there exists a simulator $\Sim$,
such that 
for any possibly unbounded adversary $\algA$,
for any input ${\bf I}$, 
the following hold:
\begin{enumerate}
\item 
the algorithm ${\sf Alg}$
correctly computes the function $\mcal{F}$ with probability $1-\delta$;
and 
\item
the adversary's view in the execution ${\sf Alg}^\algA({\bf I})$
has statistical distance at most $\delta(n)$ away from the  
output of the simulator $\Sim(n)$.
Notice that since the simulator \Sim knows only the length of the input ${\bf I}$  
but not the input itself, 
this definition implies that the adversary can essentially simulate
its view in the execution  
on its own without knowing anything about the input ${\bf I}$ 
--- in other words,  nothing is leaked about the input ${\bf I}$
except for a small $\delta$ failure probability.
\end{enumerate}

Due to the definition and property of statistical distance, we immediately have the following.
If an algorithm ${\sf Alg}$ $\delta$-data-obliviously realizes
a deterministic functionality $\mcal{F}$ by the above alternative definition,
then it $2\delta$-data-obliviously realizes 
$\mcal{F}$ by Definition~\ref{defn:obl}.
Furthermore, if ${\sf Alg}$  
 $\delta$-data-obliviously realizes
$\mcal{F}$ by Definition~\ref{defn:obl}, 
it must 
 $\delta$-data-obliviously realizes
$\mcal{F}$ by the above alternative definition.
\end{remark}

\paragraph{Relationship to oblivious PRAM algorithms.}
One sometimes inefficient way to design binary-fork-join algorithms 
is to take a Concurrent-Read-Exclusive-Write (CREW) PRAM algorithm,  
and simply fork $n$ threads in a binary-tree fashion to simulate every step
of the PRAM. If the original PRAM has $T(n)$ parallel runtime,  
then the same program has $T(n) \cdot \log n$ 
span in a binary-fork-join model.
Moreover, if the algorithm  
$\delta$-obliviously realizes some functionality $\mcal{F}$ on a CREW PRAM, 
the same algorithm  
$\delta$-obliviously realizes $\mcal{F}$ in a binary-fork-join model too. 
In the above, we say that an algorithm 
$\delta$-obliviously realizes the functionality $\mcal{F}$
on a CREW PRAM iff Definition~\ref{defn:obl} holds, but 
the adversary's view now consists 
of the memory addresses 
accessed by all processors in all steps of the execution.

\begin{fact}
Suppose that an algorithm ${\sf Alg}$ 
$\delta$-obliviously realizes some functionality $\mcal{F}$
on a CREW PRAM
and it completes with total work $W(n)$ and parallel runtime $T(n)$.
Then, the same algorithm ${\sf Alg}$ (using  
a binary-tree to fork $n$ threads for every PRAM step), 
$\delta$-obliviously realizes $\mcal{F}$
in a binary-fork-join model, 
with total work $W(n)$ but span $T(n) \cdot \log n$.
\label{fct:opramtoobfork}
\end{fact}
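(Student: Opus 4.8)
The plan is a direct, step-by-step simulation: compile the PRAM into a binary fork-join program by replacing each parallel step with a pair of balanced fork/join trees, bound the resulting work and span, and then obtain obliviousness by pushing the PRAM simulator through the (input-independent) fork/join wrapper, using that statistical distance cannot increase under a fixed map.

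First I would fix the simulation precisely. A parallel step of the $p$-processor CREW PRAM (with $p \le n$) is realized in two phases. In the \emph{read phase} we fork a balanced binary tree of depth $O(\log p)$ over the active processors; leaf/thread $i$ performs processor $i$'s read and stashes the fetched word in a per-thread scratch cell at a fixed address; the threads then join. In the \emph{compute-and-write phase} we again fork a balanced binary tree of the same shape; thread $i$ performs processor $i$'s local computation and its write, and the threads join. Splitting into a read phase and a separate write phase is what makes the simulation faithful --- it prevents a write by one thread from being observed by a concurrent read of another within the same PRAM step --- and no phase ever needs concurrent writes, since the original machine is exclusive-write. Each thread runs exactly the code of the corresponding PRAM processor on the same shared memory, and the program keeps the PRAM's internal coins, so functional equivalence is immediate. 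For the costs: each PRAM step spawns two fork/join trees with $O(p)$ internal nodes, so the fork/join bookkeeping costs $O(p)$ work, which is charged to the $\Omega(p)$ work the PRAM step itself performs (in the regime of interest $p = \Theta(n)$), giving total work $O(W(n))$; each tree has depth $O(\log p) = O(\log n)$ and the per-thread work inside a step is $O(1)$, so one PRAM step costs $O(\log n)$ span and the critical path over all $T(n)$ steps has span $O(T(n)\cdot\log n)$.

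The only part needing care is obliviousness. The fork-join adversary's view is the computation DAG together with, for every thread-step, the memory address accessed. The DAG is the concatenation of $2\tau$ balanced fork/join trees, where $\tau$ is the number of PRAM steps executed and the leaf counts are the per-step active-processor counts; since the PRAM algorithm is $\delta$-oblivious, $\tau$ and those counts are already determined by the PRAM's (simulatable) access-pattern transcript, so the DAG is a fixed function of that transcript. All addresses touched during fork/join bookkeeping and to the scratch cells are fixed functions of the DAG, hence input-independent, and every \emph{other} address accessed --- the shared-memory locations of the read and write phases --- is exactly the PRAM's per-step access pattern. Thus there is a fixed map $\psi$, not depending on ${\bf I}$, sending any PRAM access-pattern transcript to the corresponding fork-join view, and the fork-join execution on ${\bf I}$ produces the pair $({\bf O}, \psi({\sf view}_{\mathrm{PRAM}}))$ with ${\bf O}$ equal to the PRAM output. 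Setting $\Sim'(n) := \psi(\Sim(n))$ --- i.e., run $\Sim(n)$, read $\tau$ and the leaf counts off the simulated transcript, emit the corresponding DAG, and interleave the simulated addresses with the fixed fork/join and scratch addresses --- and applying the (identity on the output coordinate, $\psi$ on the transcript coordinate) map to both sides of the PRAM guarantee $({\bf O}_{\mathrm{PRAM}}, {\sf view}_{\mathrm{PRAM}}) \overset{\delta}{\equiv} (\mcal{F}({\bf I}), \Sim(n))$, we obtain $({\bf O}, {\sf view}) \overset{\delta}{\equiv} (\mcal{F}({\bf I}), \Sim'(n))$.

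I expect the main --- and only minor --- obstacle to be checking that the fork/join control structure and scratch space access solely input-independent addresses, so that $\psi$ is genuinely well defined and the same transformation applies verbatim on the simulated side; there is no conflict-resolution subtlety to handle because we stay within CREW on both sides (this is exactly why this direction, unlike the general CRCW-to-fork-join simulation handled elsewhere via sorting, needs no extra primitives).
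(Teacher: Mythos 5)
Your proof is correct and follows exactly the route the paper intends: the paper states this Fact without an explicit proof, treating it as immediate from the observation that each PRAM step is simulated by balanced fork/join trees whose shape and bookkeeping addresses are input-independent, so the fork-join view is a fixed post-processing $\psi$ of the PRAM view and the simulator transfers with no loss in statistical distance. Your elaboration (splitting each step into read and write phases for CREW fidelity, charging the $O(p)$ fork overhead to the step's work, and invoking the data-processing property of statistical distance) is the natural and correct formalization of that argument.
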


\section{Additional Building Blocks}
\label{sec:oblivbldgblock}
To obtain oblivious simulation of 
CRCW PRAMs in the binary fork-join model, 
we will make use of three important primitives --- just like oblivious sorting, these
primitives have been core to the data-oblivious algorithms  
literature and in constructing Oblivious Parallel RAM 
schemes~\cite{circuitopram,opramdepth,graphsc,opram}:
\begin{itemize}[leftmargin=5mm,itemsep=1pt,topsep=3pt]
\item  {\it Aggregation in a sorted array.}
Given an array in which  
each element belongs to a group, and the array is sorted such that 
all elements belong to the same group appear consecutively, 
let each element learn the 
``sum'' of all elements belonging to its group, and appearing to its right.
In a more general formulation, ``sum'' can be replaced with any commutative
and associative aggregation function $f$.

\item  {\it Propagation in a sorted array.}
Given an array in which  
each element belongs to a group, with the array sorted such that 
all elements belong to the same group appear consecutively, we call
the leftmost element of each group the group's {\it representative}.
In the outcome, every element should output the value 
held by its group's representative, i.e., the representative propagated its value
to everyone in the same group.
\ignore{
let each element learn the 
``sum'' of all elements belonging to its group, and appearing to its right.
In a more general formulation, ``sum'' can be replaced with any commutative
and associative aggregation function $f$.
}

\item  {\it Send-receive\footnote{The send-receive abstraction is often referred
to as oblivious routing in the data-oblivious 
algorithms literature~\cite{opram,circuitopram,opramdepth}. We avoid the name
``routing'' because of its other connotations in the algorithms literature.}.}
In the input, there is a source array and a destination array.
The source array represents $n$ senders, each of whom holds
a key and a value; it is promised that all keys are distinct. 
The destination array  
represents $n'$ receivers each holding a key. 
Now, have each receiver
learn the value corresponding to the key it is requesting 
from one of the sources. If the key is not found, the receiver should receive $\bot$.
Note that although each receiver wants only one value, a sender
can send its values to multiple receivers. 
\item
{\it Oblivious bin placement.}
Bin placement 
realizes the following deterministic functionality~\cite{circuitopram}.
Let $\bins$ denote the number of bins, and let $Z$ denote the target bin capacity.
We are given an input array
denoted ${\bf I}$, where
each element is either a {\it filler} denoted $\bot$ or a {\it real} element
that is tagged with a bin identifier ${\group} \in [\bins]$ denoting which bin
it wants to go to.
It is promised that every bin will receive at most $Z$ elements.
Now, move each real element in ${\bf I}$ to
its desired bin.
  If any bin is not full after the real elements have been placed, pad it with filler elements {\it at the end}
to a capacity of $Z$. Output the concatenation of the $\bins$ bins as a single array.
An algorithm that obliviously realizes bin placement 
is called oblivious bin placement.
\end{itemize}

Prior works on data oblivious algorithms~\cite{graphsc,opram,circuitopram} 
have suggested oblivious PRAM constructions for the above primitives; but na\"ively
converting them to the binary fork-join model would incur a $\log n$
factor blowup in span.
To solve the aggregation and propagation, we 
can in fact use the segmented prefix (or suffix) sum primitive~\cite{jajabook}
which is well-known in the parallel algorithms literature, and can
be implemented as a prefix sum computation
incurring $O(n)$
total work, $O(n/B)$ cache complexity, and $O(\log n)$ span~\cite{CR12,CR17}.
Moreover, the algorithm naturally has fixed, data-independent access patterns.

To solve send-receive efficiently in the binary-fork join model, we can 
use the high-level construction 
by Chan et al.~\cite{circuitopram} who showed how to realize send-receive with $O(1)$
oblivious sorts and one invocation of oblivious propagation.
If we replace the oblivious sort and propagation primitives with efficient, binary fork-join  
implementations,  oblivious send-receive achieves the sorting bound.

Chan and Shi~\cite{circuitopram}
showed that oblivious bin placement can be realized with $O(1)$ oblivious sorts
and $O(1)$ oblivious propagation.

\section{Bitonic Sort in Binary Fork Join}
\label{sec:bitonic}

Our practical variant uses bitonic sort~\cite{batcher68}
rather than AKS in various places to solve smaller instances of the sorting problem.
We show a  data-oblivious and cache-agnostic binary fork-join implementation of bitonic sort
that achieves the following the bounds in Theorem~\ref{thm:bitonic}: 
 a data oblivious cache-agnostic binary fork-join implementation 
that can sort ${n}$ elements in $O({n} \log^2 {n})$
work, $O(\log^2 {n} \cdot \log \log {n})$ span, and $O(({n}/B) \cdot \log_M {n} \cdot \log ({n}/M))$ cache complexity,  when $n > M \geq B^2$.

For bitonic sort, the constants in the big-O notation are very small, roughly $1/2$ when counting comparisons for 
both the total work and span. 
A  cache-agnostic data-oblivious implementation of bitonic sort is given in~\cite{cacheoblosort} but 
our implementation
is asymptotically better both in cache complexity and span. In particular, our span
improves on the $O(\log^3 {n})$ bound that is obtained by converting each PRAM step
in the $O(\log^2 {n})$ time EREW PRAM  bitonic sort algorithm into a fork-join computation.

\begin{figure*}[t]
\begin{center}
\includegraphics[width=0.8\textwidth]{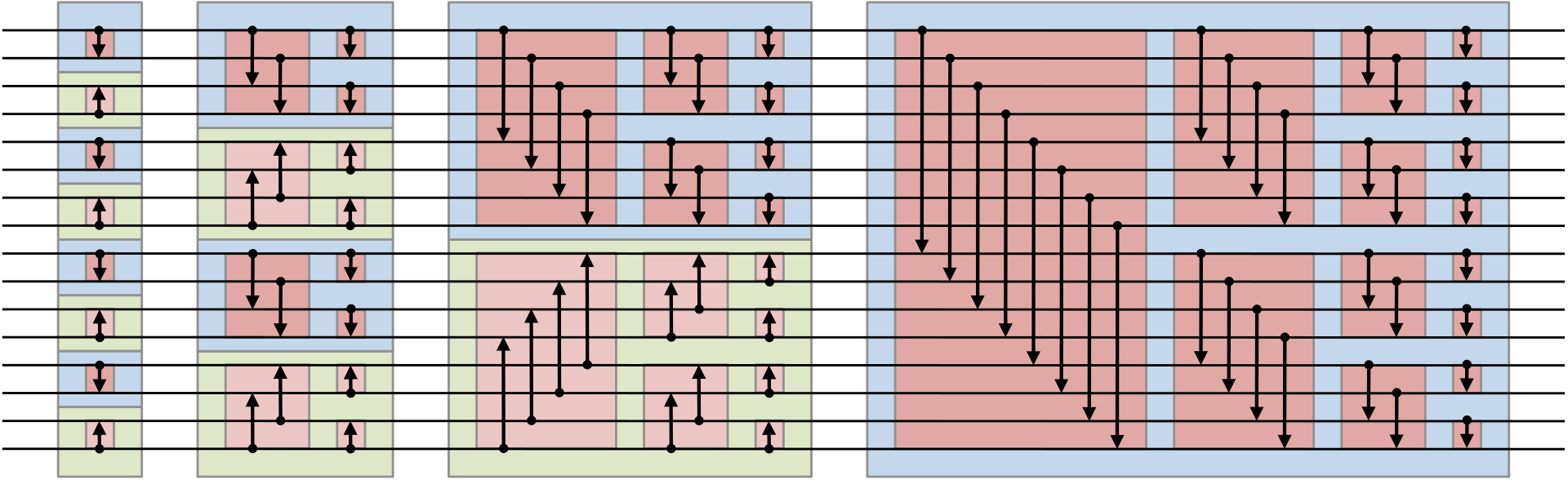}
\end{center}
\caption{A bitonic sorting network~\cite{bitonicsort,Bitonic-wiki} for $n = 16$.}
\label{fig:bitonic}
\end{figure*}

The rest of this subsection describes how to achieve Theorem~\ref{thm:bitonic}.

\subsection{Bitonic Sort}
A bitonic sorter~\cite{bitonicsort} is a sequence of $\log n$
layers of bitonic merges, where each bitonic merge is a butterfly network.
Figure~\ref{fig:bitonic} depicts a bitonic sorting network for 16 inputs~\cite{Bitonic-wiki}. 
The $n = 16$ inputs come in from the left, and go out on the right.
Each vertical arrow 
represents a comparator that compares and 
possibly swaps a pair of elements. At the end 
of the comparator, the arrow should point to the larger element.
We can see that in the first layer, there are $n/2$ butterfly networks (i.e., bitonic merges)
each of size $2$.
In the second layer, there are $n/4$ butterfly networks each of size $4$, and so on.
In the final layer, there is a single butterfly 
network of size $n$.

We implement bitonic sorting with the recursive {\sc Bitonic-Sort} algorithm given below.
It uses a recursive algorithm {\sc Bitonic-Merge} for the bitonic merging step
(which corresponds to each butterfly network
in Figure~\ref{fig:bitonic}), and we shall explain how to efficiently realize
\textsc{Bitonic-Merge} in the binary fork-join
model later in Section~\ref{sec:bitonic-merge}.

\begin{mdframed}
\begin{center}
{\sc Bitonic-Sort}$(A, flag)$
\end{center}

{\bf Input:} An unsorted array $A$ on $n$ distinct elements, and a $flag$ bit which is 1 if $A$ is to be sorted in increasing order
and 0 for a sort in decreasing order. Assume $n$ is a power of 2.

\vspace{3pt}
{\bf Algorithm}:

if $n = 1$ then return $A$; else continue with the following.
\begin{itemize}[leftmargin=6mm,topsep=0pt]
\item Let $A_l$ and $A_r$ be the subarrays of $A$ containing the first  and last $n/2$ elements of $A$ respectively.

Fork

$~~A'_l \leftarrow$ {\sc Bitonic-Sort}$(A_l,  flag)$\\
$~~~A'_r \leftarrow$ {\sc Bitonic-Sort}$(A_r, \overline{flag})$

Join

\item let $A'$ be the concatenation of $A'_l$ and $A'_r$

return {\textsc{Bitonic-Merge}}$(A', flag)$
\end{itemize}

\end{mdframed}

To analyze the performance bounds of 
\textsc{Bitonic-Sort}, we use  
$W_{\rm bmerge}(m)$, $Q_{\rm bmerge}(m)$ and $T_{\rm bmerge}(m)$ to denote  
the work, caching complexity and span, respectively, of {\textsc{Bitonic-Merge}}  
on an input of size $m$.
These bounds are obtained 
in Section~\ref{sec:bitonic-merge}, where 
 we will show that 
$W_{\rm bmerge}(m) = O(m \log m)$,  $T_{\rm bmerge}(m) = O(\log m \cdot \log\log m)$, and
$Q_{\rm bmerge}(m) = O((m/B) \cdot \log_M n)$.
For now, 
let us 
analyze the bounds for 
\textsc{Bitonic-Sort} assuming that the above bounds hold
for the \textsc{Bitonic-Merge} subroutine.

We have:

\vspace{-.3in}
\begin{eqnarray*}
W(n) &=& 2 \cdot W(n/2) + W_{\rm bmerge}(n)  ~~~\mbox{ with $W(2) =1$}\\
Q(n) &=& 2 \cdot Q(n/2) + Q_{\rm bmerge}(n) ~~~\mbox{ with $Q(n) = n/B$ if $n<M$}\\
T(n) &=& T(n/2) + T_{\rm bmerge}(n) ~~~\mbox{ with $T(2) =1$}
\end{eqnarray*}

Plugging in the values for $W_{\rm bmerge}$,  $T_{\rm bmerge}$, and
$Q_{\rm bmerge}$ from Section~\ref{sec:bitonic-merge}, we obtain our bounds for 
{\sc Bitonic-Sort}:

$W(n) = O( n \log^2 n)$, $T(n) =  O(\log^2 n \cdot \log\log n)$, and
$Q(n) = O((n/B) \cdot \log_M n \cdot \log (n/M))$, 
as long as $n \geq M \geq B^2$.

If we count only the number of comparisons for $W(n)$ and $T(n)$, the constant factor in the $O(\cdot)$ is
1/2. In our algorithm, in order to achieve cache efficiency, we also have matrix transpositions for which we 
have data movement costs. The data movement  
cost is about $n \log^2 n$ for work and about $\log^2 n\cdot \log\log n$ for the span. For $Q(n)$
the cache miss cost is about $2n/B \cdot \log_M n \cdot \log (n/M)$ including memory accesses for bitonic sort comparisons
as well as for  the matrix transposes.

\subsection{Bitonic Merge}\label{sec:bitonic-merge}

Earlier in our paper, we essentially showed how to realize a butterfly network
efficiently in a cache-agnostic, binary fork-join model.
The same approach applies to each bitonic merge within bitonic sort. 
In Figure~\ref{fig:bitonic}, 
we can see that each bitonic merge is a butterfly network.
In comparison with our earlier \textsc{Rec-ORBA} algorithm,
here the butterfly network is in the reverse direction:
in the first layer of a $m$-size bitonic merge, each element in the input sequence is compared to the
element $m/2$ away from it in the sequence, and it is only in the last layer that adjacent elements
are compared.  
For this reason, in our algorithm 
${\textsc{Bitonic-Merge}}$ below, we first perform a transpose 
of the $\sqrt{m}$ partitions,  
and then
perform the first batch of recursive calls. We then perform another transpose
right after the first batch of recursive calls.  
This recursion structure we adopt is also similar to the FFT algorithm 
in Frigo et al.~\cite{frigo1999cache}.
As in {\sc Bitonic-Sort} we include a $flag$ bit in
${\textsc{Bitonic-Merge}}$ along with the input $I$ to
indicate whether the output is in increasing or decreasing order.

The algorithm ${\textsc{Bitonic-Merge}}$ below obliviously evaluates
a single bitonic merge  within bitonic sort 
in the cache-agnostic, binary fork-join paradigm.
For simplicity, in this section, we assume that the size
of the problem $m$ is a perfect square in every recursive call. 
If $m$ is not a perfect square, we can deal with it in a similar way as
the earlier {\sc Rec\text{-}Orba}
algorithm.

\begin{mdframed}
\begin{center}
${\textsc{Bitonic-Merge}}({\bf I}, flag)$
\end{center}
\noindent {\bf Input:}
The input array ${\bf I}$ contains $m$ elements where $m$ is a power of $2$.

\vspace{5pt}
\noindent {\bf Algorithm:}
If $m = 1$, just return ${\bf I}$. If $m = 2$, apply a single comparator to the pair of  
elements and return the elements in increasing order if  $flag=1$ and in decreasing order if $flag=0$.
Otherwise, perform the following:
\begin{enumerate}[topsep=1pt,itemsep=1pt]
\item 
Write the $m$ inputs as a $\sqrt{m}\times \sqrt{m}$ matrix 
where each group of consecutive $\sqrt{m}$ elements form a row. 
Perform a transposition on this matrix, and let 
$J_1, \ldots, J_{\sqrt{m}}$
be the rows of the transposed matrix. 

\item 
In parallel: For $i \in [\sqrt{m}]$, recursively call 
$$K_i := {\textsc{Bitonic-Merge}}(J_i, flag).$$

\item 
Imagine that each $K_i$ is a row of a $\sqrt{m} \times \sqrt{m}$ matrix.
Perform a transposition on this matrix.
Let $L_1, \ldots, L_{\sqrt{m}}$ be the rows of the transposed matrix.

\item 
In parallel: For $i \in [\sqrt{m}]$, recursively call 
$${\bf O}_i := {\textsc{Bitonic-Merge}}(L_i, flag).$$

\item Return ${\bf O}_1$, $\ldots$, ${\bf O}_{\sqrt{m}}$.
\end{enumerate}
\end{mdframed}

\paragraph{Correctness.}
We argue why the above algorithm ${\textsc{Bitonic-Merge}}$ implements
a reverse-butterfly network as shown in Figure~\ref{fig:bitonic}. 
To see this, it suffices to show that Steps 1, 2, and 3 
together evaluate the first 
$\sqrt{m}$ layers of the butterfly network.
Henceforth we number the $m$ input elements to the butterfly
$0, 1, \ldots, m-1$. 

The first step performs a matrix transposition on the input elements. 
After the matrix transposition, 
for $i \in [0..\sqrt{m}-1]$, the $i$-th element is followed
by the elements with indices $i + \sqrt{m}$, $i + 2\sqrt{m}, 
\ldots, i + (\sqrt{m}-1) \sqrt{m}$
and they reside in the same row.

For $j \in \sqrt{m}$, in the $j$-th layer of the butterfly network,
elements with indices $i$ satisfying the condition 
$(i \mod (m/2^{j-1})) < m/2^{j}$  
gets paired with the element $i' := i + m/2^{j}$.
After the first matrix transposition, 
$i$ and $i'$ must be in the same row, and they are distance  
$\sqrt{m}/2^{j}$ apart --- notice that this is exactly a smaller reverse-butterfly structure 
within the same row whose size is $\sqrt{m}$.

This shows that after Steps 1 and 2, every element 
performs the same set of compare-and-swaps with the correct  
elements stipulated in the reverse-butterfly structure.
Now, Step 3 does a reverse transposition and rearranges the output
into the same order as they appear in the original reverse-butterfly network.

\paragraph{Analysis.}
The algorithm ${\textsc{Bitonic-Merge}}$'s performance
is charaterized by the following recurrences 
where $Q_{\rm bmerge}(m)$ denotes the cache complexity 
on a problem of size $m$, $T_{\rm bmerge}(m)$ denotes the span, 
and $W_{\rm bmerge}(m)$ denotes the total work.
\[
\begin{array}{l}
\text{for } m > 2: 
W_{\rm bmerge}(m) = 2 \sqrt{m} \cdot W_{\rm bmerge}(\sqrt{m}) + O(n) \\[5pt]
\text{for } m > M: 
Q_{\rm bmerge}(m) = 2 \sqrt{m} \cdot Q_{\rm bmerge}(\sqrt{m}) + O(m/B) \\[5pt]
\text{for } m > 2: 
T_{\rm bmerge}(m) = 2 \cdot T_{\rm bmerge}(\sqrt{m}) + O(\log m)
\end{array}
\]
The base conditions are:
\[
\begin{array}{l}
\text{for } m \leq  2: 
W_{\rm bmerge}(m) =  O(1), \ \ T_{\rm bmerge}(m) = O(1) \\[5pt]
\text{for } m \leq M: 
Q_{\rm bmerge}(m) = m/B
\end{array}
\]

The recurrences solve to 
$W_{\rm bmerge}(n) = O(n\log n)$, 
$T_{\rm bmerge}(n) = O(\log n \log \log n)$, and 
$Q_{\rm bmerge}(n) = O((n/B)\log_M n)$ 
as long as $n \geq M \geq B^2$.


\section{Background on OPRAM Simulation}~\label{sec:chan-opram}
In Chan, Chung, and Shi~\cite{opramdepth}'s OPRAM simulation algorithm, there are 
$O(\log s)$ recursion depths, where each  
recursion depth stores metadata called position labels for the next one. 
For each recursion depth $i$, there is a binary tree containing $2^i$ leaves (called
the {\it ORAM trees}~\cite{asiacrypt11,circuitopram}).
The top $\log_2 p$ levels of each 
tree is grouped together into a {\it pool} of size $O(p)$, in this way, the binary tree
actually becomes $O(p)$ disjoint subtrees.
Since the elements and their position labels are stored in random subtrees 
along random paths, except with negligible probability,   
only slightly more than logarithmically many requests will hit the same subtree.

Chan et al.~\cite{opramdepth}'s scheme
relies on a few additional oblivious 
primitives called oblivious ``send-receive'', ``propagation'', and ``aggregation''. 
We have introduced send-receive earlier.
For oblivious aggregation and 
propagation, we shall define them 
in Appendix~\ref{sec:oblivbldgblock}, and show that in the cache-agnostic, binary fork-join
model, they can be realized within the scan bound.

During each PRAM step, the scheme in~\cite{opramdepth}  relies on oblivious sorting and oblivious propagation  
to 
1) perform some preprocessing of the batch of $p$ memory requests;
2) use oblivious routing and attempt to fetch the requested elements and their position labels 
from the pools if they exist in the pools;
and 
3) perform some pre-processing to needed to look up the trees 
(since the requested elements may not be in the pools).
At this point, they sequentially look through all the $\log s$ trees, where in each
tree, a single path from the root to some random leaf is visited, taking $O(\log \log s)$ 
parallel time per tree, and $O(\log s \log \log s)$ depth in total. All other
steps in the scheme can be accomplished in $O(\log s)$ PRAM depth and thus  
in Chan et al.~\cite{opramdepth}, the sequential ORAM tree lookup phase is the bottleneck
in depth.

After the aforementioned fetch phase, they then 
need to perform maintenance operations to maintain
the correctness of the data structure: first, the fetched
elements and their position labels must be removed from the ORAM trees. 
To perform the removal in parallel without causing
write conflicts, some coordination effort is necessary
and the coordination can be achieved through using oblivious sorting 
and oblivious aggregation.
Next, 
the algorithm performs a maintenance operation called ``evictions'', where selected elements
in the pool are evicted back into the ORAM trees.
Each of the $\log s$ trees will have exactly two paths touched during the maintenance phase, 
and oblivious
sorting and oblivious routing techniques are used to select appropriate elements
from the pools to evict back into the ORAM trees.
Finally, a pool clean up operation is performed to compress the pools' size 
by removing filler elements acquired during the above steps --- this can be accomplished
through oblivious sorting.
We refer the reader to Chan et al.~\cite{opramdepth}
for a full exposition of the techniques.


\section{Applications}\label{sec:app}
\label{sec:apps}
We describe various applications of our new sorting algorithm, including
list ranking, Euler tour and tree functions, tree contraction, connected components,
and minimum spanning forest. 
Some of our data-oblivious algorithms asymptotically outperform
the previous best known insecure algorithms (in the cache-agnostic, binary fork-join model), 
while the rest of our results
match the previous best known insecure results. 
 Euler tour and tree contraction were  also considered 
in data-oblivious algorithms~\cite{oblgraph00,oblgraph01,blantongraph}
but the earlier works are inherently sequential.

\subsection{List Ranking}

In the list ranking problem we are given an $n$-element linked list on the elements $\{1, 2, \ldots , n\}$.
The linked list is represented by the successor array $S[1..n]$, where $S[i] = j$ implies that element 
$j$ is the successor of element $i$. If $i$ is the last element in the linked list, then $S[i]$ is
set to 0.
Given the successor array $S[1..n]$, the list ranking problem asks for the rank array $R[1..n]$,
where $R[i]$ is the number of elements ahead of element $i$ in the linked list represented by
array $S$, i.e., its distance to the end of the linked list. 
In the weighted version of list ranking, each element has a value, and needs to compute the sum of the values at elements ahead
of it in the linked list. Both versions can be solved with essentially the same algorithm.

To realize list ranking obliviously, 
we first apply an oblivious random permutation to randomly permute the input array. 
Then, every entry in the permuted array calculates its new index 
through an all-prefix-sum computation --- we assume that each entry also carries
around its index in the original array.
Now, every entry in the permuted array requests
the permuted index of its successor through oblivious routing. 
At this point, we can apply a non-oblivious list ranking
algorithm~\cite{CR12}.
Finally,  
every entry in the permuted array routes the answer back to the original array
and this can be accomplished with oblivious routing.

We thus have the following theorem and we
match the state-of-the-art, non-oblivious, cache-agnostic, parallel 
list ranking algorithm~\cite{CR12}.

\begin{theorem}[List ranking]
Assume that $M \geq \log^{1+\epsilon} n$ and that $n \geq M \geq B^2$. 
Then, we can obliviously realize 
list ranking achieving span $O(\log^2 n \cdot \log \log n)$,
cache complexity $O((n/B) \log_M n)$, and
total work $O(n \log n)$.
Moreover, the algorithm is cache-agnostic.
\end{theorem}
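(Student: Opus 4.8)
The plan is to realize list ranking by the reduction already sketched in the body: an oblivious random permutation composed with a non-oblivious parallel list-ranking algorithm executed in the permuted name space, with the data movement between the two stages carried out by oblivious send-receive. First I would apply {\sc B-RPermute} to the array of records $(i, S[i], v_i)$, where $v_i$ is the (optional) payload; this produces a uniformly random permutation $\pi$ of the records, stored in random physical order, with each record still carrying its original index $i$. A prefix-sum pass then writes into each record its new physical location $\pi(i)$. One invocation of oblivious send-receive --- senders keyed by original index $i$ with value $\pi(i)$, receivers requesting key $S[i]$ --- then gives every record the permuted index $\pi(S[i])$ of its successor, so that the array read in physical order is precisely the successor array of the same linked list on the relabelled vertex set. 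I would then run the cache-agnostic binary-fork-join list-ranking algorithm of Cole and Ramachandran~\cite{CR12} on this relabelled instance, obtaining a rank for each physical position, and finally use one more send-receive to route the ranks back into an array indexed by the original names.

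For the cost bounds I would add up the pieces under the hypothesis $M \ge \log^{1+\epsilon} n$ and $n \ge M \ge B^2$. {\sc B-RPermute} costs $O(n\log n)$ work, $\widetilde{O}(\log n)$ span, and $O((n/B)\log_M n)$ cache (Table~\ref{tab:sort}); each prefix-sum pass is within the scan bound $O(n)$ work, $O(\log n)$ span, $O(n/B)$ cache; each oblivious send-receive is within the sort bound, being $O(1)$ oblivious sorts plus one oblivious propagation (Appendix~\ref{sec:oblivbldgblock}); and the~\cite{CR12} list ranking contributes $O(n\log n)$ work, $O(\log^2 n \cdot \log\log n)$ span, and $O((n/B)\log_M n)$ cache. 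Summing, the work is $O(n\log n)$, the cache complexity is $O((n/B)\log_M n)$, and the span, dominated by the~\cite{CR12} stage, is $O(\log^2 n \cdot \log\log n)$. Every component is cache-agnostic, so the composition is as well.

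The substantive step is the security argument, i.e., producing a simulator. {\sc B-RPermute} and send-receive are oblivious by assumption, and the prefix-sum passes have input-independent access patterns, so the only part of the view that is not immediately simulatable is the access pattern of the~\cite{CR12} algorithm on the relabelled instance. Here I would invoke the standard ``permute, then run a structure-following algorithm'' argument: the input is a single linked list on all $n$ elements, so relabelling its vertices by a uniformly random $\pi$ yields a successor array that is uniformly distributed over all directed Hamiltonian paths on $[n]$ --- a distribution that does not depend on the original input $S$. Since the control flow and memory accesses of the~\cite{CR12} list-ranking algorithm depend only on the successor structure and the algorithm's internal random coins, and never branch on payload values, the access pattern it produces in our execution is distributed exactly like the access pattern of~\cite{CR12} run on a freshly sampled uniformly random directed Hamiltonian path on $[n]$. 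The simulator $\Sim(n)$ therefore samples such a path, runs~\cite{CR12} on it to harvest an access-pattern trace, and concatenates that trace with the fixed access patterns of the permutation, prefix-sum, and send-receive stages, which it also reconstructs from $n$. Because the functionality's output (the rank array) is a deterministic function of the input, the joint distribution of (real output, real view) agrees with (functionality output, simulated view) up to the negligible error inherited from {\sc B-RPermute}, which yields the claimed statistical obliviousness.

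I expect the main obstacle to be exactly this last argument: formalizing that running a generic parallel list-ranking routine on the randomly relabelled instance leaks nothing. It rests on two points that must be stated carefully --- (a) that the input is a single $n$-vertex path, so that a random relabelling has a canonical, input-independent distribution; and (b) that the~\cite{CR12} algorithm is ``oblivious to payload'', i.e., its access pattern is a function of the pointer structure and its own randomness alone, so its trace can be generated by the simulator without the input. The rest is routine composition of oblivious building blocks together with adding up the stated bounds.
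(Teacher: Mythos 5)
Your proposal is correct and follows essentially the same route as the paper: oblivious random permutation of the records, a prefix-sum to compute new indices, oblivious send-receive to relabel successor pointers, the non-oblivious cache-agnostic list-ranking algorithm of Cole and Ramachandran on the permuted instance, and a final send-receive to route ranks back, with the costs summing exactly as you state. The paper only sketches the security argument, whereas you spell out the simulator and the key fact that a randomly relabelled $n$-vertex path is a uniformly random Hamiltonian path independent of the input; this is a faithful elaboration of, not a departure from, the paper's reasoning.
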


\subsection{Euler Tour}
In the Euler tour problem we are given an unrooted tree $T$, where each edge is duplicated, to
represent the two edges corresponding to a forward and backward traversal of the edge in depth first traversal of the tree,
when rooted at some vertex. This multigraph representation of $T$ has an Euler tour
since every vertex has even degree. 
The goal is to create an Euler tour on the edges of $T$ by creating a linked list on these edges, where the successor of each edge is
the next edge in the Euler tour. 

In the standard PRAM  algorithms literature,  the input is assumed to be 
represented by
circular adjacency lists $Adj(v)$ where 
$v$ is a vertex in $T$. Specifically, $Adj(v)$ 
stores all the edges of the form $(v, *)$ in a circular list.
Also, for each edge $(u,v)$ in $Adj(u)$ it is assumed that there is a pointer
to its other copy $(v,u)$ in $Adj(v)$ (and vice versa).
For each edge $(u,v)$ in $T$, let $Adjsucc (u,v)$ be the successor of
edge $(u,v)$ in $Adj(u)$. 
It can be shown that if we assign the successor on the Euler tour of each edge $(x,y)$ 
as $\tau((x,y)) = Adjsucc(y, x)$, 
then the resulting circular linked list is an Euler tour of $T$.  This is a 
constant time parallel algorithm on an EREW PRAM and is an $O(\log n)$ span and linear work algorithm with binary forks and joins.

\paragraph{Oblivious and cache-agnostic algorithm.}
The above parallel algorithm is not cache-efficient since a cache miss can be incurred for computing $\tau((x,y))$ for each edge $(x,y)$.
We now adapt the Euler tour algorithm in~\cite{CR12} to  obtain a data-oblivious, cache-agnostic, binary fork-join realization. 

Suppose that the input is a list of edges in the tree.
Our algorithm does the following: 
\begin{enumerate}[itemsep=1pt,topsep=1pt,leftmargin=5mm]
\item Make a reverse of each edge, e.g.,
the reverse of $(u, v)$ is $(v, u)$.
\item 
Sort all edges by the 
first vertex such that every edge $(u, v)$ except the last one, by looking at
its right neighbor, 
knows its successor in the (logical) circular adjacency list $Adj(u)$.  

Use oblivious propagation 
such that the last edge in each vertex's circular adjacency list 
learns its successor in the circular list too.
\item 
Finally, use oblivious send-receive where every edge
$(u, v)$ requests $(v, u)$'s successor in $Adj(v)$.
\end{enumerate}

All steps in the above algorithm can be accomplished in the sorting bound.

\paragraph{Tree computations with Euler tour.}
Euler tour is a powerful primitive.  Once an Euler tour $\tau$ of $T$ is constructed, the
tree can be rooted at any vertex, and many basic tree properties such as 
preorder numbering, postorder numbering, depth, and number
of descendants in the rooted tree can be computing using list ranking  
on $\tau$ (with an incoming edge to the root removed
to make it a non-circular linked list).
We can therefore obtain oblivious realizations
of these tree computations too, by invoking our oblivious Euler tour
algorithm followed by oblivious list ranking.
The performance bounds will be dominated by the list ranking step.

\hide{
\subsection{Tree Contraction, Connected Components, and Minimum Spanning Forest}
We present new results on tree contraction, connected components,
and minimum spanning forest. 
Our data-oblivious algorithms for all three problems  asymptotically outperform
the previous best known insecure algorithms (in the cache-agnostic, binary fork-join model).
Our data-oblivious algorithms are randomized  due to the use of our randomized sorting algorithm;
for the prior best insecure algorithms it suffices to use
the SPMS sorting algorithm~\cite{CR17} and therefore, they are deterministic (but not data oblivious).

\paragraph{Tree contraction.}
The EREW PRAM tree contraction algorithm of Kosaraju and Delcher~\cite{KD97} 
runs in $O(\log n)$ parallel steps
with $O(n/\log n)$ processors. 
Viewing this computation in the work-time formulation (see, e.g., \cite{jajabook}),
this computation has $\log n$ phases where the $i$-th phase, $i\geq 1$,  performs a constant number 
of parallel steps with  $O(n/c^{i-1})$
work on $O(n/c^{i-1})$ data items, for a constant $c>1$.
Further, the rate of decrease is fixed and data independent, and at
the end of each phase, 
every memory location knows whether it is still needed in future computation.
Since the memory used is geometrically decreasing in successive phases,
 a constant fraction of the memory locations  
will be no longer needed at the end of each phase.

To obtain an 
oblivious simulation of tree contraction in the cache-agnostic, binary fork-join model, 
we can apply our earlier Theorem~\ref{thm:pram-emul}  
in Section~\ref{sec:pram-app}
in a slightly non-blackbox fashion. Basically, we use 
the strategy of 
Section~\ref{sec:pram-app} to simulate each PRAM step, 
always using the actual number of processors
needed in that step,  
which is the work for that step in the work-time formulation.
Additionally, we introduce the following modification:
at the end of each phase, 
use oblivious sort move the memory entries no longer needed to the end, 
effectively removing them from the future computation.
In this way, we get the first result in  
in Theorem~\ref{thm:conn-msf} below.

\paragraph{Connected components and minimum spanning forest.}
Another important application of Theorem~\ref{thm:pram-emul} is in computing connected components and minimum spanning forest (MSF)  in an undirected graph (edge-weighted 
graph for MSF) on $n$ nodes and $m$ edges. Both of these problems can be solved efficiently on a PRAM in $T(n)=O(\log n)$ parallel time and with $O(m+n)$ space~\cite{vishkinCC,PR02}.
Hence we have $T(n)= O(\log n)$ and $p(n) + s(n) = O(m+n)$, leading to the second result in the following theorem
using Theorem~\ref{thm:pram-emul} for each of the $T(n)$ steps.

\begin{theorem}\label{thm:conn-msf}
Suppose that $M > \log^{1+\epsilon } (m+n)$ and $m+n \geq M \geq B^2$.
Then, 

(i) Tree contraction on an $n$-node rooted tree can be obliviously realized
with a cache-agnostic, binary fork-join program  with
$O(W_{\rm sort}(n))$ work, $O(Q_{\rm sort}(n))$ cache complexity,
and $O(\log n \cdot T_{\rm sort}(n))$ span. 

(ii)
Connected components and minimum spanning forest in a graph with $n$ nodes 
and $m$ edges can be obliviously realized
with a cache-agnostic, binary fork-join program
with $O(\log n \cdot W_{\rm sort}(m+n))$ work, 
$O(\log n \cdot Q_{\rm sort}(m+n))$ cache complexity, and
$O(\log n \cdot  T_{\rm sort}(m+n))$ span. 
\end{theorem}

}

\bibliographystyle{alpha}
\bibliography{refs,crypto,cache_obliv}


\end{document}